\numberwithin{equation}{section}
\newtheorem{theorem}{Theorem}[section]
\newtheorem{remark}{Remark}[section]
\newtheorem{corollary}{Corollary}[section]
\newtheorem{proposition}{Proposition}[section]
\newtheorem{definition}{Definition}[section]
\let\oldparagraph\paragraph
\renewcommand{\paragraph}{
	\@ifstar
	\xxxParagraphStar
	\xxxParagraphNoStar
}
\newcommand{\xxxParagraphStar}[1]{\oldparagraph*{#1}\mbox{}}
\newcommand{\xxxParagraphNoStar}[1]{\oldparagraph{#1}\mbox{}}
\let\oldsubparagraph\subparagraph
\renewcommand{\subparagraph}{
	\@ifstar
	\xxxSubParagraphStar
	\xxxSubParagraphNoStar
}
\newcommand{\xxxSubParagraphStar}[1]{\oldsubparagraph*{#1}\mbox{}}
\newcommand{\xxxSubParagraphNoStar}[1]{\oldsubparagraph{#1}\mbox{}}
\patchcmd\longtable{\par}{\if@noskipsec\mbox{}\fi\par}{}{}
\def\maxwidth{\ifdim\Gin@nat@width>\linewidth\linewidth\else\Gin@nat@width\fi}
\def\maxheight{\ifdim\Gin@nat@height>\textheight\textheight\else\Gin@nat@height\fi}
\def\fps@figure{htbp}
	\renewcommand*\contentsname{Table of contents}
	\newcommand\contentsname{Table of contents}
	\renewcommand*\listfigurename{List of Figures}
	\newcommand\listfigurename{List of Figures}
	\renewcommand*\listtablename{List of Tables}
	\newcommand\listtablename{List of Tables}
	\renewcommand*\figurename{Figure}
	\newcommand\figurename{Figure}
	\renewcommand*\tablename{Table}
	\newcommand\tablename{Table}
\newcommand{\anon}{1}
\begin{document}

	\def\spacingset#1{\renewcommand{\baselinestretch}%
		{#1}\small\normalsize} \spacingset{1}


	\if1\anon
	{
		\title{\bf Unifiedly Efficient Inference on All-Dimensional Targets for Large-Scale GLMs}
		\author{Bo Fu\textsuperscript{1}, Dandan Jiang\textsuperscript{1}\thanks{Corresponging author: Dandan Jiang. Email address: jiangdd@mail.xjtu.edu.cn}\\
			\textsuperscript{1} School of Mathematics, Xi'an Jiaotong University}
		\maketitle
	} \fi
	
	\if0\anon
	{
		\bigskip
		\bigskip
		\bigskip
		\begin{center}
			{\LARGE\bf Unifiedly Efficient Inference on All-Dimensional Targets for Large-Scale GLMs}
		\end{center}
		\medskip
	} \fi
	
	\bigskip
	\begin{abstract}
		The scalability of Generalized Linear Models (GLMs) for large-scale, high-dimensional data often forces a trade-off between computational feasibility and statistical accuracy, particularly for inference on pre-specified parameters. While subsampling methods mitigate computational costs, existing estimators are typically constrained by a suboptimal $r^{-1/2}$ convergence rate, where $r$ is the subsample size. This paper introduces a unified framework that systematically breaks this barrier, enabling efficient and precise inference regardless of the dimension of the target parameters. We propose three estimators tailored to different scenarios. For low-dimensional targets, we propose a de-variance subsampling (DVS) estimator that achieves a markedly improved rate of $\max\{r^{-1}, n^{-1/2}\}$,  permitting valid inference even with very small subsamples. As $r$ grows, a multi-step refinement of our estimator is proven to be asymptotically normal and semiparametric efficient when $r/\sqrt{n}\to\infty$, matching the full-sample estimator-confirmed by its Bahadur representation. Critically, for high-dimensional targets, we develop a novel decorrelated score function that facilitates simultaneous inference for a diverging number of pre-specified parameters. Numerical experiments demonstrate that our framework delivers a superior balance of computational efficiency and statistical accuracy across both low- and high-dimensional inferential tasks in large-scale GLM, thereby realizing the promise of unifiedly efficient inference for large-scale GLMs.
	\end{abstract}
	
	\noindent%
	{\it Keywords:} High-dimensional inference, decorrelated score function, generalized linear models, subsampling
	\vfill
	
	\newpage
	\spacingset{1.8} 

\section{Introduction}\label{sec1}
Driven by rapid advancement of technology, data volumes have surged, imposing substantial computational burden and theoretical limitations on conventional approaches. A canonical example arises in regression models involving a scalar response $y$ upon a $p$-dimensional covariate vector $\boldsymbol{x}$. The high dimensionality of the covariate $\boldsymbol{x}$ has motivated the development of high-dimensional statistics; for a comprehensive review, see \citet{buhlmann2011statistics,hastie2015statistical,wainwright2019high}. While a substantial body of literature has focused on estimation and variable selection, the literature on statistical inference has also developed rapidly in recent years. To tackle the high-dimensional inference problem, the debiased Lasso \citep{javanmard2014confidence,van2014asymptotically,zhang2014confidence} and the decorrelated score function \citep{ning2017general,fang2020test} emerge as two key statistical techniques that are both semiparametrically efficient, with numerous extensions to a variety of high-dimensional models \citep{cai2025statistical,cheng2022regularized,yan2023confidence,han2022robust,cai2023statistical}.

In large-scale problems, where both dimensionality $p$ and sample size $n$ grow, computation becomes a major bottleneck. Subsampling provides an effective strategy by selecting a subset of size $r\ll n$ to reduce computational cost. Non-uniform subsampling further improves accuracy by prioritizing informative observations, with designs ranging from leverage-score sampling \citep{ma2014statistical} to methods optimizing asymptotic variance \citep{wang2018optimal,wang2019information,zhang2021optimal,yu2022optimal,wang2021optimal,fan2021optimal,ai2021optimalb}. Although most subsampling estimators converge at rate $r^{-1/2}$, \citet{su2024subsampled} showed that a one-step correction can attain the rate $\max\{n^{-1/2},r^{-1}\}$. However, existing designs mainly target low-dimensional covariates and may fail when $p>r$.

Motivated by the increasing prominence of large-scale data, recent work \citep{shan2024optimal,shao2024optimal} adopts decorrelated-score methods to obtain optimal subsampling estimators in high-dimensional GLMs. Nevertheless, two limitations remain: (i) subsampling sacrifices accuracy relative to full-sample estimators, yet full-sample analysis incurs substantial computational cost; and (ii) existing procedures  target only low-dimensional parameters and do not support high-dimensional simultaneous inference.

In this paper, we address the two aforementioned issues by developing statistically efficient procedures for inference on both low- and high-dimensional parameters in large scale GLMs. Leveraging the one-step idea of \citet{su2024subsampled}, we generalize it to large scale GLMs and propose a de-variance subsampling (DVS) estimator for pre-specified low-dimensional parameters, attaining the convergence rate $\max\{n^{-1/2},r^{-1}\}$ and enabling valid inference from a small subsample. In particular, when $r/\sqrt{n}\to\infty$, the estimator is asymptotically normal and semiparametrically efficient with the convergence rate $n^{-1/2}$. When full-data accuracy is essential, we further introduce a computationally efficient multi-step estimator that is also semiparametrically efficient. Moreover, in contrast to most prior works, which are restricted to inference on low-dimensional parameters, we generalize the decorrelated score function \citep{ning2017general} to high-dimensional parameters of interest by adopting the classic technique \citep{cai2011constrained,yan2023confidence,cai2025statistical} to approximate the inverse Hessian matrix. Since simultaneous inference remains nontrivial in high dimensions, we employ the high-dimensional bootstrap \citep{chernozhukov2023high} to obtain valid confidence regions. Our contributions can be summarized as follows:
\begin{enumerate}
	\item \textbf{A Unifying Inference Framework for Low-Dimensional Targets.} Beyond merely improving upon existing subsampling estimators, we develop a two-stage inference framework that adapts to varying computational budgets and accuracy requirements. Theoretically, we break the long-standing $r^{-1/2}$ barrier through a novel DVS estimator, which achieves a strictly superior   $\max\{n^{-1/2},r^{-1}\}$ rate. This constitutes a fundamental advancement in subsampling theory. Practically, we complement this with a multi-step estimator that transitions seamlessly to full-sample statistical efficiency when more computational resources are available. The two methods are unified under a non-asymptotic Bahadur representation framework, which not only certifies their asymptotic normality and semiparametric efficiency but also provides a coherent principle for selecting the appropriate estimator in practice. This systematic approach resolves the trade-off between computational cost and inferential accuracy that has limited prior subsampling methods.
	\item \textbf{A Scalable Framework for High-Dimensional Simultaneous Inference.} For the challenging task of inference on a diverging number of parameters, we propose a novel subsampled decorrelated score method, which successfully integrates subsampling into high-dimensional simultaneous inference and overcomes the severe computational bottlenecks of full-data methods. By establishing a non-asymptotic Bahadur representation and employing a high-dimensional Gaussian approximation with bootstrap calibration, our method achieves statistical accuracy comparable to the full-sample debiased estimator while offering substantial computational savings. This provides a practical and theoretically sound solution for large-scale simultaneous inference problems previously considered intractable.
\end{enumerate}

The rest of this paper is structured as follows. 
Section \ref{sec2} presents a subsampling framework for large-scale GLMs constructed upon the decorrelated score function. Section \ref{sec3} develops two estimators for the interested low-dimensional parameters  in large-scale GLMs, the de-variance subsampling (DVS) and a multi-step estimator, tailored to different computational budgets for inferring low-dimensional parameters of interest in large-scale GLMs. Section \ref{sec4} extends the decorrelated score approach to enable simultaneous inference on high-dimensional targeted parameters. Simulation studies are reported in Section \ref{sec5}, and a real data analysis is presented in Section \ref{sec6}. Section \ref{sec7} delivers our concluding remarks and some avenues for future exploration. Technical proofs and additional simulation results are relegated to the supplementary materials.\\
\noindent
\textbf{Notation:} For a vector $\mathbf{v}=(v_1,\cdots,v_p)^\top\in\mathbb{R}^{p}$, denote $\|\mathbf{v}\|=(\sum_{i=1}^{p}|v_i|^2)^{1/2}$, $\|\mathbf{v}\|_1=\sum_{i=1}^{p}|v_i|$, $\|\mathbf{v}\|_\infty=\max_{1\leq i\leq p}|v_i|$. For $\mathcal{S}\subseteq\{1,\cdots,p\}$, let $\mathbf{v}_{\mathcal{S}}=\{v_j,j\in\mathcal{S}\}$ and $\nabla_{\mathcal{S}}f(\mathbf{v})$ be the gradient of $f(\mathbf{v})$ with respect to $\mathbf{v}_{\mathcal{S}}$ for a given function $f$. For a matrix $\mathbf{A}=(a_{ij})_{1\leq i\leq k_1,1\leq j\leq k_2}\in \mathbb{R}^{k_1\times k_2}$, denote $\|\mathbf{A}\|_\infty=\max_{1\leq i\leq k_1,1\leq j\leq k_2}|a_{ij}|$, $\|\mathbf{A}\|_{1,1}=\sum_{i=1}^{k_1}\sum_{j=1}^{k_2}|a_{ij}|$, $\|\mathbf{A}\|_{L_1}=\max_{1\leq i\leq k_1}\sum_{j=1}^{k_2}|a_{ij}|$, and let $\|\mathbf{A}\|$ be the spectral norm of $\mathbf{A}$. We use $\otimes$ to depict the Kronecker product. $b'(t)$, $b''(t)$, $b'''(t)$ are the first, second, and third derivatives of $b(t)$, respectively. We define the sub-Gaussian norm and the sub-exponential norm of a random variable $X$ as $\|X\|_{\psi_2}=\inf\{t>0:\mathbb{E}\exp(X^2/t^2)\leq 2\}$ and $\|X\|_{\psi_1}=\inf\{t>0:\mathbb{E}\exp(|X|/t)\leq 2\}$, respectively. For any two sequences $\{a_n\}$ and $\{b_n\}$, we use $a_n=O(b_n),a_n=o(b_n),$ and $a_n\asymp b_n$ to represent that there exists $c,C>0$ such that $|a_n|\leq C|b_n|$ for all $n$, $\lim_{n\to\infty}|a_n|/|b_n|=0$, and $0<c<|a_n/b_n|<C<\infty$, respectively. $a_n\lesssim b_n$ means $a_n=O(b_n)$.  In addition, $O_P$ and $o_p$ have similar meanings as above except that the relationship now holds with high probability.

\section{Preliminaries}\label{sec2}

\subsection{Decorrelated score function for GLMs}\label{sec2.1}
In this section, we present a concise overview of GLMs and the decorrelated score function. Suppose that the observations $\{y_i,\boldsymbol{x}_i\}_{i=1}^n$ are independent and identically distributed (i.i.d.) drawn from the population $\{y,\boldsymbol{x}\}$ with some joint distribution $P$, where $y\in\mathbb{R}$ represents the response variable and \(\boldsymbol{x}\in\mathbb{R}^p\) is the covariate vector. Within the framework of GLMs with a canonical link, the conditional density of $y|\boldsymbol{x}$ can be expressed as:
\begin{align}
	f(y|\boldsymbol{x},\boldsymbol{\beta}_0,\sigma_0)\propto 
	\exp\left\{
	\frac{y\boldsymbol{x}^\top \boldsymbol{\beta}_0 - b(\boldsymbol{x}^\top \boldsymbol{\beta}_0)}{c(\sigma_0)}
	\right\}=\exp\left\{
	\frac{y(\boldsymbol{\theta}_0^\top \boldsymbol{z}+\boldsymbol{\gamma}_0^\top\boldsymbol{u})- b(\boldsymbol{\theta}_0^\top \boldsymbol{z}+\boldsymbol{\gamma}_0^\top\boldsymbol{u})}{c(\sigma_0)}
	\right\},\notag
\end{align}
where $\boldsymbol{\beta}_0=(
\boldsymbol{\theta}_0^\top,
\boldsymbol{\gamma}_0^\top)^\top$ 
is the unknown parameter vector, and $\boldsymbol{x}^\top=(\boldsymbol{z}^\top,\boldsymbol{u}^\top)$. Specifically, the covariates are partitioned into two components: $\boldsymbol{z}\in\mathbb{R}^{d}$ are the covariates corresponding to $\boldsymbol{\theta}$ that is of interest and $d\leq p$, and $\boldsymbol{u}\in\mathbb{R}^{p-d}$ are the covariates corresponding to nuisance parameter $\boldsymbol{\gamma}$. Though $d$ is assumed fixed in many previous works \citep{ning2017general,shan2024optimal,shao2024optimal}, we do not impose this condition in this work. The functions $b(\cdot)$ and $c(\cdot)$ are known functions, with $\sigma_0$ a known dispersion parameter $\sigma$, which is assumed to be fixed. In this formulation, we concentrate exclusively on $\boldsymbol{\theta}$. To streamline the notation, the intercept term is subsumed into $\boldsymbol{\gamma}_0$ without loss of generality.

Consider the negative log-likelihood function $l(\boldsymbol{\beta}_0)=l(\boldsymbol{\theta}_0,\boldsymbol{\gamma}_0)\propto -\log f(y|\boldsymbol{x},\boldsymbol{\beta}_0,\sigma_0)=b(\boldsymbol{\theta}_0^\top \boldsymbol{z}+\boldsymbol{\gamma}_0^\top\boldsymbol{u})-y(\boldsymbol{\theta}_0^\top \boldsymbol{z}+\boldsymbol{\gamma}_0^\top\boldsymbol{u}).$ The decorrelated score function in \citet{ning2017general} takes the form $S(\boldsymbol{\beta},\mathbf{w}_0)=\nabla_\theta l(\boldsymbol{\boldsymbol{\beta}})-\mathbf{w}_0\nabla_\gamma l(\boldsymbol{\boldsymbol{\beta}})$, where 
\begin{align*}
	\mathbf{w}_0&=\mathop{\arg\min}\limits_{\mathbf{w}\in\mathbb{R}^{d\times (p-d)}}\mathbb{E}[\|\nabla_{\boldsymbol{\theta}}l(\boldsymbol{\beta}_{0})-\mathbf{w}\nabla_{\boldsymbol{\gamma}}l(\boldsymbol{\beta}_{0})\|]^2=\mathop{\arg\min}\limits_{\mathbf{w}\in\mathbb{R}^{d\times (p-d)}}\mathbb{E}[b''(\boldsymbol{\beta}_0^\top\boldsymbol{x})\|\boldsymbol{z}-\mathbf{w}\boldsymbol{u}\|^2]\\&=\mathbb{E}[b''(\boldsymbol{\beta}_0^\top\boldsymbol{x})\boldsymbol{z}\boldsymbol{u}^\top]\left(\mathbb{E}[b''(\boldsymbol{\beta}_0^\top\boldsymbol{x})\boldsymbol{u}\boldsymbol{u}^\top]\right)^{-1}.
\end{align*}

The decorrelated score function offers two merits. First, $\boldsymbol{\gamma}$ does not influence the decorrelated score function around $\boldsymbol{\theta}_{0}$ since $\mathbb{E}[\nabla_{\boldsymbol{\gamma}}\boldsymbol S(\boldsymbol{\beta}_{0},\mathbf{w}_0)]=\mathbf{0}_{d\times (p-d)}$, ensuring the convergence rate of $\boldsymbol{\theta}$ is not affected by the high-dimensional nuisance parameters $\boldsymbol{\gamma}$. Second, it is orthogonal to the nuisance score function, as expressed by $\mathbb{E}[S(\boldsymbol{\beta}_{0},\mathbf{w}_0)^\top\nabla_{\boldsymbol{\gamma}}l(\boldsymbol{\beta}_{0})]=\mathbf{0}_{d\times (p-d)}$.
For the i.i.d. observations $\{y_i,\boldsymbol{x}_i\}_{i=1}^n$, the full-data-based decorrelated score function is $$S(\boldsymbol{\theta},\boldsymbol{\gamma}_0,\mathbf{w}_0)=\frac{1}{n}\sum_{i=1}^{n}S_i(\boldsymbol{\theta},\boldsymbol{\gamma}_0,\mathbf{w}_0)=\frac{1}{n}\sum_{i=1}^{n}\left(b'(\boldsymbol{\theta}^\top \boldsymbol{z}_{i}+\boldsymbol{\gamma}_0^\top\boldsymbol{u}_{i})-y_i\right)\left(\boldsymbol{z}_{i}-\mathbf{w}_0\boldsymbol{u}_{i}\right).$$

\subsection{Subsampling-based decorrelated score function}
Now we introduce the subsampling decorrelated score function implemented via Poisson subsampling. Denote $\mathcal{K}$ as the index set of a subsample attained via Poisson subsampling defined later, the corresponding subsampling-based decorrelated score function is
\begin{align*}
	S^*(\boldsymbol{\theta},\boldsymbol{\gamma}_0,\mathbf{w}_0)&=\frac{1}{r}\sum_{i\in\mathcal{K}}S_i(\boldsymbol{\theta},\boldsymbol{\gamma}_0,\mathbf{w}_0)=\frac{1}{r}\sum_{i\in\mathcal{K}}\left(b'(\boldsymbol{\theta}^\top \boldsymbol{z}_{i}+\boldsymbol{\gamma}_0^\top\boldsymbol{u}_{i})-y_i\right)\left(\boldsymbol{z}_{i}-\mathbf{w}_0\boldsymbol{u}_{i}\right).
\end{align*}

Compared to full-data-based approaches, subsampling offers a practical solution to alleviate the computational burden. Here we adopt Poisson subsampling \citep{yu2022optimal,yao2023optimal,shan2024optimal}, a technique that has garnered significant interest due to its computational efficiency relative to sampling with replacement. Specifically, let $\mathcal{P}$ be the index set of subsampled data points, then uniform Poisson subsampling involves generating $n$ i.i.d. Bernoulli random variables $\delta_i,i=1,\cdots,n$ to determine whether the $i$-th data point is selected, i.e., $\mathbb{P}(\delta_i=1)=r/n$, where $r$ is the expected subsample size. The general procedure for uniform Poisson subsampling is presented in Algorithm \ref{alg1}. 
\begin{algorithm}[htbp]
	\SetAlgoLined
	{\bf Initialization} $\mathcal{P}=\varnothing$, $p=r/n$, where $r$ is the expected subsample size\;
	\For{$i=1,\cdots,n$}{
		Generate a Bernoulli variable $\delta_i\sim\text{Bernoulli}(p)$\;
		\textbf{If} $\delta_i=1$ \textbf{do:} Update $\mathcal{P}=\mathcal{P}\cup\{(y_{i},\boldsymbol{x}_i)\}$\;
	}
	{\bf Output} subsample $\mathcal{P}$.
	\caption{General uniform Poisson subsampling algorithm}\label{alg1}
\end{algorithm}
A key merit of Poisson subsampling is that the inclusion decision for each data $(y_i,\boldsymbol{x}_i)$ hinges only on $\delta_i$, with the $\delta_i$'s, for $i=1,\cdots,n$, being mutually independent. Moreover, $\delta_i$ can be generated either sequentially or in blocks, allowing for efficient parallel processing. We assume $r\leq n$ throughout this paper, a condition that naturally applies in the setting of large-scale datasets. When $r=n$, the subsample degenerates to the original dataset.

Since both the full-data-based decorrelated score function $S(\boldsymbol{\theta},\boldsymbol{\gamma}_0,\mathbf{w}_0)$ and the subsampled version $S^*(\boldsymbol{\theta},\boldsymbol{\gamma}_0,\mathbf{w}_0)$ depend on the unknown parameters $\boldsymbol{\gamma}_0$ and $\mathbf{w}_0$, we draw another subsample $\mathcal{K}_\mathrm{p}$ with an expected subsample size of $r_\mathrm{p}$ through Algorithm \ref{alg1} to obtain their initial estimators similar to \citet{shan2024optimal}. Specifically, the two Lasso type estimators $\hat{\boldsymbol{\beta}}_\mathrm{p},\hat{\mathbf{w}}_\mathrm{p}$ can be calculated by employing \texttt{R} packages \texttt{glmnet} and \texttt{pqr}:
\begin{align}
	&\hat{\boldsymbol{\beta}}_\mathrm{p}=(\hat{\boldsymbol{\theta}}_\mathrm{p}^\top,\hat{\boldsymbol{\gamma}}_\mathrm{p}^\top)^\top=\mathop{\arg\min}\limits_{\boldsymbol{\beta}\in\mathbb{R}^{p}}\left\{\frac{1}{r_\mathrm{p}}\sum_{i\in\mathcal{K}_\mathrm{p}}\left(b(\boldsymbol{\beta}^\top\boldsymbol{x}_i^{*\mathrm{p}})-y_i^{*\mathrm{p}}\boldsymbol{\beta}^\top\boldsymbol{x}_i^{*\mathrm{p}}\right)+\lambda\|\boldsymbol{\beta}\|_1\right\},\label{eq1}\\
	&\hat{\mathbf{w}}_\mathrm{p}=\mathop{\arg\min}\limits_{\mathbf{w}\in\mathbb{R}^{d\times(p-d)}}\left\{\frac{1}{r_\mathrm{p}}\sum_{i\in\mathcal{K}_\mathrm{p}}b''(\hat{\boldsymbol{\beta}}_\mathrm{p}^\top\boldsymbol{x}_i^{*\mathrm{p}})\|\boldsymbol{z}_i^{*\mathrm{p}}-\mathbf{w}\boldsymbol{u}_i^{*\mathrm{p}}\|^2+\tau\sum_{j=1}^{p-d}\|\mathbf{w}_j\|_1\right\},\label{eq2}
\end{align}
where $\lambda,\tau$ are two regularized parameters and $\mathbf{w}_j$ is the $j$-th column of $\mathbf{w}$. Then the full-data-based estimator $\hat{\boldsymbol{\theta}}_{full}\in\mathbb{R}^{d}$ is the solution to the equation below:
\begin{align}\label{eq3}
	S(\boldsymbol{\theta},\hat{\boldsymbol{\gamma}}_\mathrm{p},\hat{\mathbf{w}}_\mathrm{p})=\frac{1}{n}\sum_{i=1}^{n}S_i(\boldsymbol{\theta},\hat{\boldsymbol{\gamma}}_\mathrm{p},\hat{\mathbf{w}}_\mathrm{p})=\frac{1}{n}\sum_{i=1}^{n}\left(b'(\boldsymbol{\theta}^\top \boldsymbol{z}_{i}+\hat{\boldsymbol{\gamma}}_\mathrm{p}^\top\boldsymbol{u}_{i})-y_i\right)\left(\boldsymbol{z}_i-\hat{\mathbf{w}}_\mathrm{p}\boldsymbol{u}_i\right)=\mathbf{0},
\end{align}
and the subsampling-based estimator 
$\hat{\boldsymbol{\theta}}_{uni}\in\mathbb{R}^{d}$ is the solution to the equation below:
\begin{equation}
	\begin{aligned}\label{eq4}
		\mathbf{0}=S^*(\boldsymbol{\theta},\hat{\boldsymbol{\gamma}}_\mathrm{p},\hat{\mathbf{w}}_\mathrm{p})&=\frac{1}{r}\sum_{i\in\mathcal{K}}S_i(\boldsymbol{\theta},\hat{\boldsymbol{\gamma}}_\mathrm{p},\hat{\mathbf{w}}_\mathrm{p})=\frac{1}{r}\sum_{i\in\mathcal{K}}\left(b'(\boldsymbol{\theta}^\top \boldsymbol{z}_{i}+\hat{\boldsymbol{\gamma}}_\mathrm{p}^\top\boldsymbol{u}_{i})-y_i\right)\left(\boldsymbol{z}_i-\hat{\mathbf{w}}_\mathrm{p}\boldsymbol{u}_i\right)\\
		&=\frac{1}{n}\sum_{i=1}^{n}\frac{\delta_i}{r/n}\left(b'(\hat{\boldsymbol{\theta}}_\mathrm{p}^\top \boldsymbol{z}_{i}+\hat{\boldsymbol{\gamma}}_\mathrm{p}^\top\boldsymbol{u}_{i})-y_i\right)\left(\boldsymbol{z}_{i}-\mathbf{w}_0\boldsymbol{u}_{i}\right),
	\end{aligned}
\end{equation}
where $\mathcal{K}=\{i|\delta_i=1,i\in\{1,\cdots,n\}\}$ as the index set of subsample attained by performing Algorithm \ref{alg1} with expected subsample size $r$.

However, the subsampled estimator $\hat{\boldsymbol{\theta}}_{uni}$ defined in \eqref{eq4} only achieves $r^{-1/2}$ rate \citep{shan2024optimal}, implying a loss of accuracy compared to the full-data estimator. To perform efficient inference without compromising statistical accuracy, we explore methods to perform fast inference via subsampling techniques in the following sections.

\section{Inference for low-dimensional parameters of interest}\label{sec3}
In this section, we introduce two estimators for efficient inference of low-dimensional parameters of interest in large scale generalized linear models. Both estimators leverage suitably chosen subsample sizes to retain full-data statistical accuracy, while substantially reducing computational cost. The first, referred to as the DVS estimator, emphasizes computational speed and flexibility, making it well-suited for scenarios where small pilot and main subsample sizes ($r_\mathrm{p}$ and $r$) are used. The second, a multi-step estimator, is more computationally economical when full-sample accuracy is desired.
\subsection{Inference via the DVS estimator}\label{sec3.1}
As an intuitive statement, \citet{ning2017general} considered a one-step estimator based on an initial Lasso estimator $\hat{\boldsymbol{\theta}}_{\mathrm{p}}$, taking the following form:
\begin{align}\label{eq5}
	\hat{\boldsymbol{\theta}}_{\mathrm{p}}-\left(\nabla_{\boldsymbol{\theta}}S(\hat{\boldsymbol{\theta}}_{\mathrm{p}},\hat{\boldsymbol{\gamma}}_\mathrm{p},\hat{\mathbf{w}}_\mathrm{p})\right)^{-1}S(\hat{\boldsymbol{\theta}}_{\mathrm{p}},\hat{\boldsymbol{\gamma}}_\mathrm{p},\hat{\mathbf{w}}_\mathrm{p}),
\end{align}
except that $\hat{\boldsymbol{\theta}}_{\mathrm{p}}$ in their work is the Lasso estimator based on the full sample. In contrast, we consider the one-step estimation based on $\hat{\boldsymbol{\theta}}_{uni}$ attained via \eqref{eq4}, which is asymptotically normal \citep{shan2024optimal}. The estimator is formally defined as:

\begin{definition}[de-variance subsampling (\textit{DVS}) estimator]\label{def1}
	Continuing the notations in Section \ref{sec2}, define the DVS estimator as
	\begin{equation}
		\begin{aligned}\label{eq6}
			\tilde{\boldsymbol{\theta}}=&\hat{\boldsymbol{\theta}}_{uni}-\left(\nabla_{\boldsymbol{\theta}}S^*(\hat{\boldsymbol{\theta}}_{uni},\hat{\boldsymbol{\gamma}}_\mathrm{p},\hat{\mathbf{w}}_\mathrm{p})\right)^{-1}S(\hat{\boldsymbol{\theta}}_{uni},\hat{\boldsymbol{\gamma}}_\mathrm{p},\hat{\mathbf{w}}_\mathrm{p}),
		\end{aligned}
	\end{equation}
	where $\nabla_{\boldsymbol{\theta}}S^*(\hat{\boldsymbol{\theta}}_{uni},\hat{\boldsymbol{\gamma}}_\mathrm{p},\hat{\mathbf{w}}_\mathrm{p})=(1/r)\sum_{i\in\mathcal{K}}b''(\hat{\boldsymbol{\theta}}_{uni}^\top \boldsymbol{z}_{i}+\hat{\boldsymbol{\gamma}}_\mathrm{p}^\top\boldsymbol{u}_{i})\left(\boldsymbol{z}_i-\hat{\mathbf{w}}_\mathrm{p}\boldsymbol{u}_i\right){\boldsymbol{z}_{i}}^\top$, and $S(\hat{\boldsymbol{\theta}}_{uni},\hat{\boldsymbol{\gamma}}_\mathrm{p},\hat{\mathbf{w}}_\mathrm{p})=(1/n)\sum_{i=1}^{n}\left(b'(\hat{\boldsymbol{\theta}}_{uni}^\top \boldsymbol{z}_{i}+\hat{\boldsymbol{\gamma}}_\mathrm{p}^\top\boldsymbol{u}_{i})-y_i\right)\left(\boldsymbol{z}_i-\hat{\mathbf{w}}_\mathrm{p}\boldsymbol{u}_i\right)$.
\end{definition}
Note that the additional cost of \eqref{eq6} is $O(n(p-d)d)$, which remains small since even computing the L-optimal subsampling probabilities in \citet{shao2024optimal,shan2024optimal} already requires $O(n(p-d)d)$. Moreover, $\boldsymbol{\Phi}:=\mathbb{E}[b''(\boldsymbol{\beta}_0^\top\boldsymbol{x})(\boldsymbol{z}-\mathbf{w}_0\boldsymbol{u})\boldsymbol{z}^\top]=\mathbb{E}[b''(\boldsymbol{\beta}_0^\top\boldsymbol{x})(\boldsymbol{z}-\mathbf{w}_0\boldsymbol{u})(\boldsymbol{z}-\mathbf{w}_0\boldsymbol{u})^\top]$ coincides with the submatrix of $(\mathbb{E}[b''(\boldsymbol{\beta}_0^\top\boldsymbol{x})\boldsymbol{x}\boldsymbol{x}^\top])^{-1}$ corresponding to the coordinates of $\boldsymbol{z}$, and is therefore symmetric. Thus, the DVS estimator resembles standard debiased estimators for low-dimensional targets, where the main task is approximating the inverse Hessian, differing mainly in that it builds on $\hat{\boldsymbol{\theta}}_{uni}$ rather than $\hat{\boldsymbol{\theta}}_{\mathrm{p}}$ and incorporates decorrelated score functions with subsampling for computational gains. As shown in \citet{ning2017general}, the full-data decorrelated-score estimator is semiparametrically efficient, in line with debiased estimators in \citet{van2014asymptotically,zhang2014confidence,javanmard2014confidence}. Building on this insight, we use the asymptotic distribution of $\hat{\boldsymbol{\theta}}_{uni}$ to perform inference with reduced pilot and main subsample sizes ($r_\mathrm{p}$ and $r$), yielding substantial computational savings while achieving semiparametric efficiency when $r/\sqrt{n}\to\infty$.

To derive the statistical properties of the proposed DVS estimator, we impose the following regularity conditions, which are commonly assumed in high-dimensional GLMs.\\
\noindent
(A.1) For some constants $C_1,C_2>0$, $\lambda_{\min}(\mathbb{E}[b''(\boldsymbol{\beta}_0^\top\boldsymbol{x})\boldsymbol{x}\boldsymbol{x}^\top])\geq C_1$, $\lambda_{\max}(\mathbb{E}[b''(\boldsymbol{\beta}_0^\top\boldsymbol{x})\boldsymbol{x}\boldsymbol{x}^\top])\leq C_2$, where $\lambda_{\min}(\cdot)$ and $\lambda_{\max}(\cdot)$ are the smallest eigenvalues of a given matrix.\\
(A.2) $\boldsymbol{\beta}_0$ is sparse with support $\mathcal{S}_{\boldsymbol{\beta}_0}$ and $|\mathcal{S}_{\boldsymbol{\beta}_0}|=s_1$. The matrix $\mathbf{w}_0$ is sparse with support $\mathcal{S}_{\mathbf{w}_0}=\{j:\mathbf{w}_{0,j}\neq\mathbf{0},j\in[p-d]\}$ and $|\mathcal{S}_{\mathbf{w}_0}|=s_2$.\\
(A.3) $\|\boldsymbol{x}\|_\infty\leq 
C$, $\|y-b'(\boldsymbol{\beta}_0^\top\boldsymbol{x})\|_{\psi_1}\leq C$, and $\max_{1\leq k\leq d}|(\mathbf{w}_0)_{k}\boldsymbol{u}|\leq C$ for some constant $C$, where $(\mathbf{w}_0)_{k}$ is the $k$-th row of $\mathbf{w}_0$.\\
(A.4) There exists some constants $R_1\leq R_2$ such that $\boldsymbol{\beta}_0^\top\boldsymbol{x}\in [R_1,R_2]$. $b$ is third times differentiable and for any $t\in[R_1-\varepsilon,R_2+\varepsilon]$ with some constant $\varepsilon>0$ and a sequence $t_1$ satisfying $|t_1-t|=o(1)$, it holds that $0<b''(t)\leq C$, $|b''(t_1)-b''(t)|\leq C|t_1-t|b''(t)$, $|b'''(t)|\leq C$, and $|b'''(t_1)-b'''(t)|\leq C|t_1-t|$ for some constant $C$.\\
(A.5) $\boldsymbol{\Phi}=\mathbb{E}[b''(\boldsymbol{\beta}_0^\top\boldsymbol{x})(\boldsymbol{z}-\mathbf{w}_0\boldsymbol{u})\boldsymbol{z}^\top]$ is finite and $\lambda_{\min}(\boldsymbol{\Phi})>c$ for some positive constants $c$.\\
(A.6) $r^{-1}\log p=o(1)$, $r_\mathrm{p}^{-1/2}(s_1\vee s_2)\log p=o(1)$, $r^{1/2}r_\mathrm{p}^{-1}(s_1\vee s_2)\log p=o(1)$, $(s_1\vee s_2)\sqrt{\log p/r}=o(1)$. The regularized parameters $\lambda$ and $\tau$ are $\lambda\asymp\tau\asymp \sqrt{\log p/r_\mathrm{p}}$.

Assumptions (A.1)-(A.3) and Assumption (A.6) are also assumed by \citet{shao2024optimal}, in which Assumptions (A.1)-(A.3) are common regular conditions for high-dimensional GLMs \citep{ning2017general}. Assumption (A.1) imposes an eigenvalue constraint on the design matrix, necessitating that the eigenvalues of $\mathbb{E}[\boldsymbol{x}\boldsymbol{x}^\top]$ are strictly bounded between $0$ and $\infty$, a requirement satisfied when $0<C_1<b''(\boldsymbol{\beta}_0^\top\boldsymbol{x})<C_2$ for some constants $C_1$ and $C_2$, and this is encompassed by Assumption (A.4). Assumption (A.2) imposes sparsity conditions on both $\boldsymbol{\beta}_0$ and $\mathbf{w}_0$. 
For the sake of technical simplicity, Assumption (A.3) supposes that the design matrix is bounded and the residuals exhibit sub-exponential tails, which are similarly assumed by \citet{ning2017general,shao2024optimal}. Assumption (A.4) imposes a regularity condition on $b(t)$, which holds for a broad class of GLMs. Assumption (A.5) is similar to the condition in \cite[Theorem 1]{shao2024optimal} and can be directly deduced from Assumption (A.1) together with the Cauchy interlacing theorem in fact. Finally, Assumption (A.6) imposes a minimal restriction on the subsampling size and is non-restrictive.

\begin{remark}\label{remark3.1}
	The sparsity condition for $\mathbf{w}_0$ can be changed into $\max_{1\leq k\leq d}\|(\mathbf{w}_0)_k\|_0=s_2$ if one applies the standard Lasso estimator or the Dantzig type estimator to obtain the estimator of $\mathbf{w}_0$ \citep{cheng2022regularized,ning2017general}. From this sight of view, \citet{ning2017general} states that this condition is identical to the degree of the node $\boldsymbol{Z}$ in the graph if $\boldsymbol{x}$ follows a Gaussian graphical model. This is similar to the assumptions in \citet{van2014asymptotically} if we only want to debias the parameters of interest.
\end{remark}

The following theorem states the properties of the DVS estimator, indicating that it can achieve a faster convergence rate than previous subsampling-based estimators.

\begin{theorem}\label{thm1}
	Under Assumptions (A.1)-(A.6), we have
	\begin{equation}
		\begin{aligned}\label{eq7}
			\tilde{\boldsymbol{\theta}}-\boldsymbol{\theta}_0=O_P\Big(n^{-1/2}+r^{-1}+\sqrt{s_1s_2}r_\mathrm{p}^{-1}\log p+r^{-1/2}r_\mathrm{p}^{-1}s_1^2\log p)\Big).
		\end{aligned}
	\end{equation}
	Furthermore, if $r_\mathrm{p}$ satisfies
	\begin{equation}
		\begin{aligned}\label{eq8}
			\min\{\sqrt{n},r\}\max\left\{\sqrt{s_1s_2}r_\mathrm{p}^{-1}\log p,r^{-1/2}r_\mathrm{p}^{-1}s_1^2\log p,\sqrt{r^{-1}s_2r_\mathrm{p}^{-1}\log p}\right\}=o(1),
		\end{aligned}
	\end{equation}
	the following holds: 
	\begin{equation}
		\begin{aligned}\label{eq9}
			\min\{\sqrt{n},r\}(\tilde{\boldsymbol{\theta}}-\boldsymbol{\theta}_0)\mathop{\to}\limits^{d}h(\boldsymbol{U}),
		\end{aligned}
	\end{equation}
	where $\boldsymbol{U}$ is a $(d^2+2d)$-dimensional random vector with distribution $\mathbb{N}(\mathbf{0}_{d^2+2d},\boldsymbol{V})$, $\boldsymbol{V}$ is partitioned as $[\boldsymbol{V}_{i,j}]_{i,j=1,2,3}$ with
	$$\left\{
	\begin{aligned}
		\boldsymbol{V}_{11}&=\boldsymbol{V}_{22}=c(\sigma_0)\mathbb{E}[b''(\boldsymbol{\beta}_0^\top \boldsymbol{x})(\boldsymbol{z}-\mathbf{w}_0\boldsymbol{u})(\boldsymbol{z}-\mathbf{w}_0\boldsymbol{u})^\top]=c(\sigma_0)\boldsymbol{\Phi},\\
		\boldsymbol{V}_{12}&=\boldsymbol{V}_{21}^\top=\sqrt{r/n}\boldsymbol{V}_{11},\,\boldsymbol{V}_{13}=\boldsymbol{V}_{23}=\boldsymbol{V}_{31}^\top=\boldsymbol{V}_{32}^\top=\mathbf{0}_{d\times d^2},\\
		\boldsymbol{V}_{33}&=(1-\frac{r}{n})\mathbb{E}[(b''(\boldsymbol{\beta}_0^\top\boldsymbol{x}))^2\mbox{vec}\{(\boldsymbol{z}-\mathbf{w}_0\boldsymbol{u})\boldsymbol{z}^\top\}\mbox{vec}\{(\boldsymbol{z}-\mathbf{w}_0\boldsymbol{u})\boldsymbol{z}^\top\}^\top],
	\end{aligned}
	\right.$$
	and
	\begin{align}\label{eq10}
		h(\boldsymbol{U})=\frac{1}{2}m_2\boldsymbol{\Phi}^{-1}\mathcal{T}(\boldsymbol{\theta}_0,\boldsymbol{\gamma}_0,\mathbf{w}_0)((\boldsymbol{\Phi}^{-1}\boldsymbol{U}_1)\otimes(\boldsymbol{\Phi}^{-1}\boldsymbol{U}_1))-m_1\boldsymbol{\Phi}^{-1}\boldsymbol{U}_2-m_2\boldsymbol{\Phi}^{-1}\boldsymbol{U}_3\boldsymbol{\Phi}^{-1}\boldsymbol{U}_1,
	\end{align}
	where $m_1=\min\{\sqrt{n},r\}/\sqrt{n},m_2=\min\{\sqrt{n},r\}/r$, $\mathcal{T}(\boldsymbol{\theta}_0,\boldsymbol{\gamma}_0,\mathbf{w}_0)$ is a $d\times d^2$ matrix with its $j$-th row being $\mbox{vec}\{\mathbb{E}[b'''(\boldsymbol{\beta}_0^\top\boldsymbol{x})(z_{j}-(\mathbf{w}_0)_{j}\boldsymbol{u})\boldsymbol{z}\boldsymbol{z}^\top]\}$, and $\boldsymbol{U}_1\in\mathbb{R}^{d},\boldsymbol{U}_2\in\mathbb{R}^{d},\boldsymbol{U}_3\in\mathbb{R}^{d^2}$ denote the first $d$, next $d$ and last $d^2$ components of $\boldsymbol{U}$, respectively.\\
\end{theorem}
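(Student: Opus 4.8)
The plan is to establish a non-asymptotic Bahadur representation for $\tilde{\boldsymbol{\theta}}$ by Taylor-expanding the one-step update \eqref{eq6} around $\boldsymbol{\theta}_0$, isolating a fixed number of leading stochastic terms whose joint limit is Gaussian and bounding everything else as a remainder. Write $\Delta:=\hat{\boldsymbol{\theta}}_{uni}-\boldsymbol{\theta}_0$ and abbreviate $H^*:=\nabla_{\boldsymbol{\theta}}S^*(\hat{\boldsymbol{\theta}}_{uni},\hat{\boldsymbol{\gamma}}_\mathrm{p},\hat{\mathbf{w}}_\mathrm{p})$, suppressing the plug-in nuisance arguments as $(\cdot)$. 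First I would insert the exact second-order expansion of the full-data score,
\[
S(\hat{\boldsymbol{\theta}}_{uni},\cdot)=S(\boldsymbol{\theta}_0,\cdot)+\nabla_{\boldsymbol{\theta}}S(\boldsymbol{\theta}_0,\cdot)\Delta+\tfrac12\mathcal{R}(\bar{\boldsymbol{\theta}})(\Delta\otimes\Delta),
\]
into \eqref{eq6} and rearrange to obtain the exact identity
\[
\tilde{\boldsymbol{\theta}}-\boldsymbol{\theta}_0=-(H^*)^{-1}S(\boldsymbol{\theta}_0,\cdot)+(H^*)^{-1}\bigl[H^*-\nabla_{\boldsymbol{\theta}}S(\boldsymbol{\theta}_0,\cdot)\bigr]\Delta-\tfrac12(H^*)^{-1}\mathcal{R}(\bar{\boldsymbol{\theta}})(\Delta\otimes\Delta).
\]
This is the skeleton of the whole argument: the first term supplies the efficient full-data contribution of order $n^{-1/2}$, while the last two terms are quadratic in $\Delta=O_P(r^{-1/2})$ and hence of order $r^{-1}$.

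The crux is to expand the bracket $H^*-\nabla_{\boldsymbol{\theta}}S(\boldsymbol{\theta}_0,\cdot)$, which I would split as $[\nabla_{\boldsymbol{\theta}}S^*(\hat{\boldsymbol{\theta}}_{uni},\cdot)-\nabla_{\boldsymbol{\theta}}S^*(\boldsymbol{\theta}_0,\cdot)]+[\nabla_{\boldsymbol{\theta}}S^*(\boldsymbol{\theta}_0,\cdot)-\nabla_{\boldsymbol{\theta}}S(\boldsymbol{\theta}_0,\cdot)]$. A further Taylor step on the first difference produces the third-derivative tensor, and, using the cancellation $I-(H^*)^{-1}\nabla_{\boldsymbol{\theta}}S=(H^*)^{-1}(H^*-\nabla_{\boldsymbol{\theta}}S)$, it combines with the $\mathcal{R}$-remainder so that the two quadratic contributions collapse to $\tfrac12\boldsymbol{\Phi}^{-1}\mathcal{T}(\boldsymbol{\theta}_0,\boldsymbol{\gamma}_0,\mathbf{w}_0)(\Delta\otimes\Delta)$ once $\mathcal{R}\to\mathcal{T}$ and $(H^*)^{-1}\to\boldsymbol{\Phi}^{-1}$ are invoked. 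The second difference is the subsample-versus-full Hessian fluctuation, contributing $-\boldsymbol{\Phi}^{-1}[\nabla_{\boldsymbol{\theta}}S^*-\nabla_{\boldsymbol{\theta}}S](\boldsymbol{\theta}_0)\boldsymbol{\Phi}^{-1}S^*(\boldsymbol{\theta}_0)$ after substituting the leading expansion $\Delta\approx-\boldsymbol{\Phi}^{-1}S^*(\boldsymbol{\theta}_0)$, which follows from the asymptotic normality of $\hat{\boldsymbol{\theta}}_{uni}$ in \citep{shan2024optimal,shao2024optimal}. This pins down the three stochastic drivers: the full-data score $S(\boldsymbol{\theta}_0)$, the subsample score $S^*(\boldsymbol{\theta}_0)$, and the centered subsample Hessian $\nabla_{\boldsymbol{\theta}}S^*(\boldsymbol{\theta}_0)-\nabla_{\boldsymbol{\theta}}S(\boldsymbol{\theta}_0)$.

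To obtain \eqref{eq9} I would then prove joint asymptotic normality of the scaled vector
\[
\boldsymbol{U}_n:=\bigl(\sqrt{r}\,S^*(\boldsymbol{\theta}_0)^\top,\ \sqrt{n}\,S(\boldsymbol{\theta}_0)^\top,\ \sqrt{r}\,\mbox{vec}\{\nabla_{\boldsymbol{\theta}}S^*-\nabla_{\boldsymbol{\theta}}S\}(\boldsymbol{\theta}_0)^\top\bigr)^\top\overset{d}{\to}\boldsymbol{U}\sim\mathbb{N}(\mathbf{0},\boldsymbol{V}).
\]
The covariance blocks follow from direct moment computations exploiting the Poisson structure $S^*=r^{-1}\sum_i\delta_iS_i$, $S=n^{-1}\sum_iS_i$ with independent $\delta_i\sim\text{Bernoulli}(r/n)$: the variance identity $\mathrm{Var}(y|\boldsymbol{x})=c(\sigma_0)b''(\boldsymbol{\beta}_0^\top\boldsymbol{x})$ yields $\boldsymbol{V}_{11}=\boldsymbol{V}_{22}=c(\sigma_0)\boldsymbol{\Phi}$; the shared observations give the overlap $\boldsymbol{V}_{12}=\sqrt{r/n}\,\boldsymbol{V}_{11}$; the subsampling fluctuation of the Hessian produces the $(1-r/n)$ factor in $\boldsymbol{V}_{33}$; and the cross-blocks vanish, $\boldsymbol{V}_{13}=\boldsymbol{V}_{23}=\mathbf{0}$, because the Hessian term carries the deterministic $b''$ while the scores carry the conditionally mean-zero residual $y-b'$. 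The limit is established by a conditional Lindeberg central limit theorem given the full sample combined with the full-sample central limit theorem under (A.1)--(A.4) and (A.6). Substituting $\boldsymbol{U}_n\to\boldsymbol{U}$ and $\Delta\approx-\boldsymbol{\Phi}^{-1}S^*(\boldsymbol{\theta}_0)$ into the decomposition, multiplying by $\min\{\sqrt{n},r\}$, and applying the continuous mapping theorem delivers the three summands of $h(\boldsymbol{U})$ with the constants $m_1,m_2$.

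Finally, the rate \eqref{eq7} comes from bounding the remainders together with the plug-in errors from replacing $(\boldsymbol{\gamma}_0,\mathbf{w}_0)$ by $(\hat{\boldsymbol{\gamma}}_\mathrm{p},\hat{\mathbf{w}}_\mathrm{p})$. The orthogonality of the decorrelated score, $\mathbb{E}[\nabla_{\boldsymbol{\gamma}}S(\boldsymbol{\beta}_0,\mathbf{w}_0)]=\mathbf{0}$ and $\mathbb{E}[S(\boldsymbol{\beta}_0,\mathbf{w}_0)^\top\nabla_{\boldsymbol{\gamma}}l(\boldsymbol{\beta}_0)]=\mathbf{0}$, is essential, since it forces these nuisance errors to enter only at second order and thereby generates the terms $\sqrt{s_1s_2}\,r_\mathrm{p}^{-1}\log p$ and $r^{-1/2}r_\mathrm{p}^{-1}s_1^2\log p$ rather than the first-order Lasso rate $s_1\sqrt{\log p/r_\mathrm{p}}$; the requisite $\ell_1$/$\ell_2$ rates for $\hat{\boldsymbol{\beta}}_\mathrm{p},\hat{\mathbf{w}}_\mathrm{p}$ and uniform concentration of the empirical Hessians around $\boldsymbol{\Phi}$ are supplied by (A.2), (A.5) and (A.6), with \eqref{eq8} exactly ensuring these terms are negligible after scaling by $\min\{\sqrt{n},r\}$. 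I expect the principal obstacle to be the non-asymptotic bookkeeping across the four simultaneous error scales ($n^{-1/2}$, $r^{-1}$, and the two $r_\mathrm{p}$-dependent nuisance rates): one must propagate the high-dimensional plug-in errors through $(H^*)^{-1}$ and the third-derivative remainder while keeping every bound uniform in $p$, and simultaneously track the subsample/full-sample dependence precisely, so that the cross-block $\boldsymbol{V}_{12}=\sqrt{r/n}\,\boldsymbol{V}_{11}$ is retained rather than discarded as asymptotically negligible.
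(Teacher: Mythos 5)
Your proposal is correct and follows essentially the same route as the paper: an exact second-order Taylor decomposition of the one-step update around $\boldsymbol{\theta}_0$, the substitution $\Delta\approx-\boldsymbol{\Phi}^{-1}S^*(\boldsymbol{\theta}_0,\cdot)$, identification of the same three stochastic drivers with the same covariance blocks (including the $\sqrt{r/n}$ overlap and the vanishing cross-blocks via conditional mean-zero residuals), nuisance control through Neyman orthogonality, and the continuous mapping theorem. The only minor divergences are organizational—you expand in $\boldsymbol{\theta}$ first and defer the $\boldsymbol{\gamma}$-expansion to the orthogonality argument, whereas the paper Taylor-expands jointly in $(\boldsymbol{\theta},\boldsymbol{\gamma})$, and you invoke a conditional Lindeberg CLT where the paper applies the unconditional Lindeberg--Feller theorem directly to the triangular array—neither of which changes the substance of the argument.
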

From Theorem \ref{thm1}, we have $\tilde{\boldsymbol{\theta}}-\boldsymbol{\theta}_0=O_P(n^{-1/2}+r^{-1}+\sqrt{s_1s_2}r_\mathrm{p}^{-1}\log p)$ if $r^{-1/2}s_1^{3/2}s_2^{-1/2}=o(1)$, which demonstrates that $\tilde{\boldsymbol{\theta}}-\boldsymbol{\theta}_0$ is composed of three components. The first component is of order $n^{-1/2}$, which is the convergence rate of the full-data-based estimator. The second component is of order $r^{-1}$, which is faster than the convergence rates of the subsampled estimators in previous work that are typically $r^{-1/2}$. The third component is $O_P(\sqrt{s_1s_2}r_\mathrm{p}^{-1}\log p)$, which is the error brought by the initial estimators $\hat{\boldsymbol{\beta}}_\mathrm{p}$ and $\hat{\mathbf{w}}_\mathrm{p}$. However, even when we choose $r_\mathrm{p}=r$, the convergence rate of $\tilde{\boldsymbol{\theta}}$ is $\max\{n^{-1/2},\sqrt{s_1s_2}r^{-1}\log p\}$, showing that the convergence rate is always faster than $r^{-1/2}$ when $\sqrt{s_1s_2/r}\log p=o(1)$ and the error is inversely proportional to $r$ when $r\ll\sqrt{n}$, rather than being proportional to $r^{-1/2}$.

For the asymptotic distribution of the DVS estimator $\tilde{\boldsymbol{\theta}}$, we need a larger $r_\mathrm{p}$ to ensure the accuracy. To guide practical implementations for constructing confidence intervals, we categorize the relationship between $r$ and $n$ into three scenarios to determine how to select pilot subsample size $r_\mathrm{p}$. This division facilitates flexible choices of the subsample sizes $r$ and $r_\mathrm{p}$ to strike a balance between computational cost and the length of confidence interval. We first consider the case when $r/\sqrt{n}\to\infty$. The following theorem exhibits the non-asymptotic Bahadur representation and the asymptotic distribution of the DVS estimator $\tilde{\boldsymbol{\theta}}$. It verifies the semiparametric efficiency \citep{van2000asymptotic} of the DVS estimator under $r/\sqrt{n}\to\infty$.

\begin{theorem}[Case of $r/\sqrt{n}\to\infty$]\label{thm2}
	Under Assumptions (A.1)-(A.6), we have 
	\begin{align*}
		&\sqrt{n}\{(\tilde{\boldsymbol{\theta}}-\boldsymbol{\theta}_0)+\boldsymbol{\Phi}^{-1}S(\boldsymbol{\theta}_0,\boldsymbol{\gamma}_0,\mathbf{w}_0)\}\\
		=&O_P(n^{1/2}r^{-1}+n^{1/2}(\sqrt{s_1s_2}\vee s_2)\log p/r_\mathrm{p}+n^{1/2}r^{-1/2}r_\mathrm{p}^{-1}s_1^2\log p)+o_P(1).
	\end{align*}
	
	If $r/\sqrt{n}\to\infty$ and $r_\mathrm{p}$ satisfies 
	\begin{equation}\label{eq11}
		\begin{aligned}
			r_\mathrm{p}/(\max\{n^{1/2}(\sqrt{s_1s_2}\vee s_2)\log p,n^{1/2}r^{-1/2}s_1^2\log p\}\to\infty,
		\end{aligned}
	\end{equation}
	we have $
		\sqrt{n}(\tilde{\boldsymbol{\theta}}-\boldsymbol{\theta}_0)\mathop{\to}\limits^{d} N(0,c(\sigma_0)\boldsymbol{\Phi}^{-1})$, which is the asymptotic distribution of the solution to $S(\boldsymbol{\theta},\boldsymbol{\gamma}_0,\mathbf{w}_0)=\mathbf{0}$.
\end{theorem}

While Theorem \ref{thm2} covers the case $r/\sqrt{n}\to\infty$, the other two scenarios are shown in Corollaries \ref{cor1}-\ref{cor2} below, using the notation of Theorem \ref{thm1}.

\begin{corollary}[Case of $r/\sqrt{n}\to 0$]\label{cor1}
	Under Assumptions (A.1)-(A.6), if $r/\sqrt{n}\to0$ and $r_\mathrm{p}$ satisfies $r_\mathrm{p}/(\max\{r\sqrt{s_1s_2}\log p,r^{1/2}s_1^2\log p,rs_2\log p\}\to\infty,$
	we have 
	$$r(\tilde{\boldsymbol{\theta}}-\boldsymbol{\theta}_0)\mathop{\to}\limits^{d}\frac{1}{2}\boldsymbol{\Phi}^{-1}\mathcal{T}(\boldsymbol{\theta}_0,\boldsymbol{\gamma}_0,\mathbf{w}_0)((\boldsymbol{\Phi}^{-1}\boldsymbol{U}_1)\otimes(\boldsymbol{\Phi}^{-1}\boldsymbol{U}_1))-\boldsymbol{\Phi}^{-1}\boldsymbol{U}_3\boldsymbol{\Phi}^{-1}\boldsymbol{U}_1.$$
\end{corollary}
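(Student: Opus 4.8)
The plan is to obtain this corollary as an immediate specialization of Theorem \ref{thm1} to the regime $r/\sqrt{n}\to 0$; the substantive probabilistic content is already contained in that theorem, so what remains is a careful tracking of the normalizing constants. First I would note that $r/\sqrt{n}\to 0$ forces $\min\{\sqrt{n},r\}=r$ for all sufficiently large $n$, so the normalization on the left-hand side of \eqref{eq9} is exactly the factor $r$ appearing in the corollary. Correspondingly, the two constants defined after \eqref{eq10} simplify to $m_1=\min\{\sqrt{n},r\}/\sqrt{n}=r/\sqrt{n}\to 0$ and $m_2=\min\{\sqrt{n},r\}/r=1$.

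Next I would check that the hypothesis placed on $r_\mathrm{p}$ in the corollary is precisely condition \eqref{eq8} under this regime, so that Theorem \ref{thm1} applies. Substituting $\min\{\sqrt{n},r\}=r$ into \eqref{eq8} and simplifying the three terms gives $r\sqrt{s_1s_2}r_\mathrm{p}^{-1}\log p=o(1)$, $r^{1/2}s_1^2 r_\mathrm{p}^{-1}\log p=o(1)$, and $\sqrt{rs_2 r_\mathrm{p}^{-1}\log p}=o(1)$, where the last uses $r\cdot\sqrt{r^{-1}s_2 r_\mathrm{p}^{-1}\log p}=\sqrt{rs_2 r_\mathrm{p}^{-1}\log p}$. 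These three requirements are exactly the condition that $r_\mathrm{p}$ dominates $\max\{r\sqrt{s_1s_2}\log p,\,r^{1/2}s_1^2\log p,\,rs_2\log p\}$, matching the corollary's hypothesis verbatim.

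With applicability confirmed, I would invoke the conclusion \eqref{eq9}, namely $r(\tilde{\boldsymbol{\theta}}-\boldsymbol{\theta}_0)\mathop{\to}\limits^{d}h(\boldsymbol{U})$, and evaluate $h(\boldsymbol{U})$ from \eqref{eq10} at $m_1=0$, $m_2=1$. The middle term $-m_1\boldsymbol{\Phi}^{-1}\boldsymbol{U}_2$ vanishes: since $\boldsymbol{U}_2$ is a coordinate block of the Gaussian limit it is $O_P(1)$, while $m_1\to 0$, so by Slutsky's theorem this contribution is $o_P(1)$. The surviving terms carry the factor $m_2=1$ and yield exactly $\tfrac{1}{2}\boldsymbol{\Phi}^{-1}\mathcal{T}(\boldsymbol{\theta}_0,\boldsymbol{\gamma}_0,\mathbf{w}_0)((\boldsymbol{\Phi}^{-1}\boldsymbol{U}_1)\otimes(\boldsymbol{\Phi}^{-1}\boldsymbol{U}_1))-\boldsymbol{\Phi}^{-1}\boldsymbol{U}_3\boldsymbol{\Phi}^{-1}\boldsymbol{U}_1$, which is the claimed limit.

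The one delicate point I would treat with care, and the main (though modest) obstacle, is that the limit object $h(\boldsymbol{U})$ in Theorem \ref{thm1} formally involves the $n$-dependent sequences $m_1,m_2$, so the statement is best read through the underlying non-asymptotic Bahadur representation rather than as a fixed limit law. Concretely, Theorem \ref{thm1} furnishes a decomposition $r(\tilde{\boldsymbol{\theta}}-\boldsymbol{\theta}_0)=h_n(\boldsymbol{U}_n)+o_P(1)$, in which $\boldsymbol{U}_n\mathop{\to}\limits^{d}\boldsymbol{U}\sim\mathbb{N}(\mathbf{0},\boldsymbol{V})$ and $h_n$ is the map \eqref{eq10} evaluated at the current $m_1,m_2$. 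Because $m_1\to 0$ and $m_2\to 1$, the maps $h_n$ converge to the limiting quadratic-plus-bilinear map uniformly on compact sets, so a joint application of the continuous mapping theorem and Slutsky's theorem transfers $\boldsymbol{U}_n\mathop{\to}\limits^{d}\boldsymbol{U}$ to the stated distributional limit. No new probabilistic estimates beyond those already established for Theorem \ref{thm1} are required.
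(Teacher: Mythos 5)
Your proposal is correct and takes the same route as the paper, whose entire proof of this corollary is the one-line remark that it follows from Theorem \ref{thm1}; you simply make explicit the bookkeeping ($\min\{\sqrt{n},r\}=r$, $m_1\to 0$, $m_2=1$, the reduction of \eqref{eq8} to the stated condition on $r_\mathrm{p}$, and the Slutsky/continuous-mapping step for the $n$-dependent map $h$). No gaps.
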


\begin{corollary}[Case of $r/\sqrt{n}\to C\in(0,\infty)$]\label{cor2}
	Under Assumptions (A.1)-(A.6), if $r/\sqrt{n}\to C\in(0,\infty)$ and $r_\mathrm{p}$ satisfies \eqref{eq11}, we have 
	$$\sqrt{n}(\tilde{\boldsymbol{\theta}}-\boldsymbol{\theta}_0)\mathop{\to}\limits^{d}\frac{\sqrt{n}}{2r}\boldsymbol{\Phi}^{-1}\mathcal{T}(\boldsymbol{\theta}_0,\boldsymbol{\gamma}_0,\mathbf{w}_0)((\boldsymbol{\Phi}^{-1}\boldsymbol{U}_1)\otimes(\boldsymbol{\Phi}^{-1}\boldsymbol{U}_1))-\boldsymbol{\Phi}^{-1}\boldsymbol{U}_2-\frac{\sqrt{n}}{r}\boldsymbol{\Phi}^{-1}\boldsymbol{U}_3\boldsymbol{\Phi}^{-1}\boldsymbol{U}_1.$$
\end{corollary}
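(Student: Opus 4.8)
The plan is to obtain Corollary~\ref{cor2} directly from Theorem~\ref{thm1} by specializing the general limit \eqref{eq9}--\eqref{eq10} to the balanced regime $r/\sqrt{n}\to C\in(0,\infty)$ and renormalizing from the scaling $\min\{\sqrt{n},r\}$ to $\sqrt{n}$. The crucial observation is purely algebraic: writing $\sqrt{n}(\tilde{\boldsymbol{\theta}}-\boldsymbol{\theta}_0)=\tfrac{\sqrt{n}}{\min\{\sqrt{n},r\}}\,\min\{\sqrt{n},r\}(\tilde{\boldsymbol{\theta}}-\boldsymbol{\theta}_0)$ and recalling $m_1=\min\{\sqrt{n},r\}/\sqrt{n}$, $m_2=\min\{\sqrt{n},r\}/r$, the prefactor $\sqrt{n}/\min\{\sqrt{n},r\}$ cancels the $\min$ in each term of $h(\boldsymbol{U})$: it sends $m_1\mapsto 1$ and $m_2\mapsto\sqrt{n}/r$. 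This cancellation holds identically whether $r<\sqrt{n}$ (so $\min=r$) or $r\geq\sqrt{n}$ (so $\min=\sqrt{n}$), which is why a single formula covers both sides of $C=1$.

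First I would verify that Theorem~\ref{thm1} is applicable under the corollary's hypotheses. Since $r/\sqrt{n}\to C\in(0,\infty)$ forces $r\asymp\sqrt{n}$ and hence $\min\{\sqrt{n},r\}\asymp\sqrt{n}\asymp r$, I would check that condition \eqref{eq11} implies the three requirements in \eqref{eq8} after substituting $\min\{\sqrt{n},r\}\asymp r\asymp\sqrt{n}$; a term-by-term comparison shows \eqref{eq11} is in fact equivalent to \eqref{eq8} in this regime. Thus the convergence \eqref{eq9}, $\min\{\sqrt{n},r\}(\tilde{\boldsymbol{\theta}}-\boldsymbol{\theta}_0)\to^d h(\boldsymbol{U})$, holds together with the underlying joint weak convergence of the standardized score and Hessian quantities $\hat{\boldsymbol{U}}=(\hat{\boldsymbol{U}}_1,\hat{\boldsymbol{U}}_2,\hat{\boldsymbol{U}}_3)\to^d\boldsymbol{U}\sim\mathbb{N}(\mathbf{0},\boldsymbol{V})$ established in the proof of Theorem~\ref{thm1}.

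Next I would carry out the renormalization at the level of the non-asymptotic Bahadur representation furnished by the proof of Theorem~\ref{thm1}, rather than on the limit alone, so as to track the remainder. Multiplying that representation by $\sqrt{n}/\min\{\sqrt{n},r\}$ and using the cancellation above yields
\begin{align*}
\sqrt{n}(\tilde{\boldsymbol{\theta}}-\boldsymbol{\theta}_0)=\frac{\sqrt{n}}{2r}\boldsymbol{\Phi}^{-1}\mathcal{T}(\boldsymbol{\theta}_0,\boldsymbol{\gamma}_0,\mathbf{w}_0)\big((\boldsymbol{\Phi}^{-1}\hat{\boldsymbol{U}}_1)\otimes(\boldsymbol{\Phi}^{-1}\hat{\boldsymbol{U}}_1)\big)-\boldsymbol{\Phi}^{-1}\hat{\boldsymbol{U}}_2-\frac{\sqrt{n}}{r}\boldsymbol{\Phi}^{-1}\hat{\boldsymbol{U}}_3\boldsymbol{\Phi}^{-1}\hat{\boldsymbol{U}}_1+\frac{\sqrt{n}}{\min\{\sqrt{n},r\}}R_n,
\end{align*}
where $R_n$ is the remainder of Theorem~\ref{thm1}. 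Since $\sqrt{n}/\min\{\sqrt{n},r\}$ is a bounded deterministic sequence (converging to $1$ or $1/C$) and the representation under \eqref{eq8} satisfies $\min\{\sqrt{n},r\}R_n=o_P(1)$, the scaled remainder is $o_P(1)$; importantly, no stronger control is needed because $\sqrt{n}\asymp\min\{\sqrt{n},r\}$ here. Finally, because $\sqrt{n}/r\to 1/C$ are bounded deterministic coefficients and the maps $(\boldsymbol{u}_1,\boldsymbol{u}_2,\boldsymbol{u}_3)\mapsto(\boldsymbol{\Phi}^{-1}\boldsymbol{u}_1)\otimes(\boldsymbol{\Phi}^{-1}\boldsymbol{u}_1)$ and $(\boldsymbol{u}_1,\boldsymbol{u}_3)\mapsto\boldsymbol{u}_3\boldsymbol{\Phi}^{-1}\boldsymbol{u}_1$ are continuous, the continuous mapping theorem applied to the joint convergence $\hat{\boldsymbol{U}}\to^d\boldsymbol{U}$, combined with Slutsky's lemma for the $o_P(1)$ remainder, delivers the stated limit.

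The main obstacle is not analytic depth but careful bookkeeping at the continuous-mapping step: the limiting functional is quadratic in $\boldsymbol{U}_1$ and bilinear in $(\boldsymbol{U}_3,\boldsymbol{U}_1)$, so marginal convergence is insufficient and one must invoke the full joint weak convergence of $\hat{\boldsymbol{U}}$ (with its correlation structure $\boldsymbol{V}$) before applying the continuous mapping theorem, while keeping the $n$-dependent but bounded coefficients $\sqrt{n}/(2r)$ and $\sqrt{n}/r$ inside the continuous map. A secondary point requiring attention is that the single closed form must be checked to coincide in the two sub-cases $C<1$ and $C\geq 1$, where $\min\{\sqrt{n},r\}$ equals $r$ and $\sqrt{n}$ respectively; verifying the cancellations $\tfrac{\sqrt{n}}{\min\{\sqrt{n},r\}}m_1=1$ and $\tfrac{\sqrt{n}}{\min\{\sqrt{n},r\}}m_2=\sqrt{n}/r$ in both sub-cases confirms consistency.
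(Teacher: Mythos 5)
Your proposal is correct and follows essentially the same route as the paper, whose proof of Corollary \ref{cor2} is simply that it is implied by Theorem \ref{thm1}: you specialize the limit \eqref{eq9}--\eqref{eq10} to $r/\sqrt{n}\to C$, rescale by $\sqrt{n}/\min\{\sqrt{n},r\}$ so that $m_1\mapsto 1$ and $m_2\mapsto \sqrt{n}/r$, and verify that \eqref{eq11} matches \eqref{eq8} when $r\asymp\sqrt{n}$. Your explicit bookkeeping of the remainder, the joint convergence of $(\boldsymbol{U}_1,\boldsymbol{U}_2,\boldsymbol{U}_3)$, and the continuous mapping step merely spells out what the paper leaves implicit, and all of it checks out.
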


Corollary \ref{cor1} establishes that for $r/\sqrt{n}\to 0$, the DVS estimator $\tilde{\boldsymbol{\theta}}$ achieves a convergence rate of $r$ and presents the detailed asymptotic distribution. Additionally, Corollary \ref{cor2} further indicates that when $r/\sqrt{n}\to C\in(0,\infty)$, the asymptotic distribution takes the form of a mixture between a Gaussian component (as in the case $r/\sqrt{n}\to\infty$) and an additional term linked to the dominant item under the case of $r/\sqrt{n}\to 0$. 
\begin{algorithm}[htbp]
	\SetAlgoLined
	{\bf Pilot Subsampling:} Run Algorithm \ref{alg1} with expected subsample size $r_\mathrm{p}$ to get a initial subsample set $\mathcal{K}_\mathrm{p}$, then estimate  $\hat{\boldsymbol{\beta}}_{\mathrm{p}}$ and $\hat{\mathbf{w}}_\mathrm{p}$ as in \eqref{eq1} and \eqref{eq2} based on $\mathcal{K}_\mathrm{p}$.
	
	{\bf Subsampling:} Run Algorithm \ref{alg1} with expected subsample size $r$ to attain the subsample set $\mathcal{K}$, then attain $\hat{\boldsymbol{\theta}}_{uni}$ by solving \eqref{eq4} based on the subsample set $\mathcal{K}$. 
	
	{\bf One-step Estimation:} Get $\tilde{\boldsymbol{\theta}}$ according to \eqref{eq6}.
	
	{\bf Approximation of Function and Model:} For $\boldsymbol{V}$, $\boldsymbol{\Phi}$, and $\mathcal{T}(\boldsymbol{\theta}_0,\boldsymbol{\gamma}_0,\mathbf{w}_0)$ in Theorem \ref{thm1}, replace their population versions with corresponding sample estimates $\tilde{\boldsymbol{V}}$, $\tilde{\boldsymbol{\Phi}}$, and $\mathcal{T}(\tilde{\boldsymbol{\theta}},\hat{\boldsymbol{\gamma}}_\mathrm{p},\hat{\mathbf{w}}_\mathrm{p})$ based on $\mathcal{K}_\mathrm{p}$ and substitute the unknown true parameters $\boldsymbol{\theta}_0$, $\boldsymbol{\gamma}_0$, and $\mathbf{w}_0$ with their estimations $\tilde{\boldsymbol{\theta}}$, $\hat{\boldsymbol{\gamma}}_\mathrm{p}$, and $\hat{\mathbf{w}}_\mathrm{p}$. Estimate $h(\cdot)$ by substituting sample estimates $\tilde{\boldsymbol{\Phi}}$ and $\mathcal{T}(\tilde{\boldsymbol{\theta}},\hat{\boldsymbol{\gamma}}_\mathrm{p},\hat{\mathbf{w}}_\mathrm{p})$, denoted as $\tilde{h}(\cdot)=(\tilde{h}_1(\cdot),\cdots,\tilde{h}_d(\cdot))$.
	
	{\bf Monte Carlo Simulation:} Generate $r_{m}$ random samples $\{\tilde{\boldsymbol{U}}^{(1)},\cdots,\tilde{\boldsymbol{U}}^{(r_m)}\}$ from $\mathbb{N}(\mathbf{0}_{d^2+2d},\tilde{\boldsymbol{V}})$. Let $\tilde{h}_{L,j}$ and $\tilde{h}_{U,j}$ be the lower and upper $\alpha/2$ quantiles of $\{\tilde{h}_j(\tilde{\boldsymbol{U}}^{(i)})\}_{i=1}^{r_m}$ for $j=1,\cdots,d$, respectively.
	
	{\bf Confidence Interval Construction:} For $j=1,\cdots,d$, construct the confidence interval of the $j$-th component of $\boldsymbol{\theta}_0$ with confidence level $1-\alpha$ as $[\tilde{\theta}-\tilde{h}_{U,j}/\min\{\sqrt{n},r\},\tilde{\theta}-\tilde{h}_{L,j}/\min\{\sqrt{n},r\}]$.
	
	\caption{Confidence interval construction via the DVS estimator}\label{alg2}
\end{algorithm}
In conclusion, regardless of the selection of $r$ relative to $\sqrt{n}$, \eqref{eq10} can be utilized to construct the confidence interval by choosing an appropriate value for $r_\mathrm{p}$, as discussed previously. Consequently, the confidence interval for the proposed estimator is derived using the Monte Carlo approximation of the asymptotic distribution presented in Theorem~\ref{thm1}, as outlined in Algorithm \ref{alg2}.

It is noteworthy that in the Monte Carlo step, $r_m$ refers to the number of random samples generated to estimate the quantiles used for constructing the confidence interval. These samples are drawn from $\tilde{h}(\tilde{\boldsymbol{U}})$, where $\tilde{\boldsymbol{U}}$ follows a normal distribution with mean $\mathbf{0}_{d^2+2d}$ and the covariance matrix $\tilde{\boldsymbol{V}}$ is defined in Algorithm \ref{alg2}.

\subsection{Inference via the multi-step estimator}\label{sec3.2}
In the previous subsection, the asymptotic properties of the DVS estimator is provided, affording practitioners enhanced flexibility in parameter estimation and confidence interval construction across varying selections of $r$ and $r_\mathrm{p}$ under large scale data contexts. Specifically, if the computational efficiency is required, one can choose a small $r$ and the corresponding small $r_\mathrm{p}$ to attain the DVS estimator and achieve fast inference. However, if high accuracy is desired, Algorithm \ref{alg2} may remain computationally intensive while attaining $\hat{\boldsymbol{\theta}}_{uni}$, though reducing much time compared to the full-data-based estimator. 

Inspired by \citet{ning2017general}, A naive idea is to construct a one-step estimator directly based on the initial estimator $\hat{\boldsymbol{\theta}}_{\mathrm{p}}$ provided in \eqref{eq1}, rather than the estimator obtained via $\hat{\boldsymbol{\theta}}_{uni}$. Note that in the previous subsection, $\hat{\boldsymbol{\theta}}_{uni}$ can also be viewed as a one-step estimation based on $\hat{\boldsymbol{\theta}}_\mathrm{p}$ and the subsample $\mathcal{K}$, and thus the DVS estimator can be regarded as a two-step estimator. Therefore, we propose a multi-step estimator as follows.
\begin{definition}\label{def2}
	Denote by $\hat{\boldsymbol{\theta}}_{(0)}=\hat{\boldsymbol{\theta}}_\mathrm{p}$ the initial Lasso estimator \eqref{eq1} based on the pilot subsample $\mathcal{K}_\mathrm{p}$ . For $\ell=1,\cdots$, we define
	\begin{equation}
		\begin{aligned}\label{eq13}
			\hat{\boldsymbol{\theta}}_{(\ell)}=\hat{\boldsymbol{\theta}}_{(\ell-1)}-\hat{\boldsymbol{\Phi}}_\mathrm{p}^{-1}\frac{1}{n}\sum_{i=1}^{n}\left(b'(\hat{\boldsymbol{\theta}}_{(\ell-1)}^\top \boldsymbol{z}_{i}+\hat{\boldsymbol{\gamma}}_\mathrm{p}^\top\boldsymbol{u}_{i})-y_i\right)\left(\boldsymbol{z}_i-\hat{\mathbf{w}}_\mathrm{p}\boldsymbol{u}_i\right),
		\end{aligned}
	\end{equation}
\end{definition}
where $\hat{\boldsymbol{\Phi}}_\mathrm{p}=\nabla_{\boldsymbol{\theta}}S^*(\hat{\boldsymbol{\theta}}_{\mathrm{p}},\hat{\boldsymbol{\gamma}}_\mathrm{p},\hat{\mathbf{w}}_\mathrm{p})=r_\mathrm{p}^{-1}\sum_{i\in\mathcal{K}_\mathrm{p}}b''(\hat{\boldsymbol{\beta}}_{\mathrm{p}}^\top \boldsymbol{x}_{i})\left(\boldsymbol{z}_i-\hat{\mathbf{w}}_\mathrm{p}\boldsymbol{u}_i\right)\boldsymbol{z}_i^\top$ is an estimator of $\boldsymbol{\Phi}$ based on the pilot subsample $\mathcal{K}_\mathrm{p}$.

Note that only $\hat{\boldsymbol{\theta}}_{(\ell-1)}^\top \boldsymbol{z}_{i}$ and $b'(\cdot)$ should be updated in each iteration. Therefore, compared to the one-step estimator, the additional computational cost of $\hat{\boldsymbol{\theta}}_{(\ell)}$ is $O(nd(\ell-1))$, which is small. Compared with the DVS estimator, the one-step estimator $\hat{\boldsymbol{\theta}}_{(1)}$ saves the time of calculating $\hat{\boldsymbol{\theta}}_{uni}$. The Bahadur representation of $\hat{\boldsymbol{\theta}}_{(\ell)}$ is established as follows.

\begin{theorem}\label{thm3}
	Under Assumptions (A.1)-(A.5), for each $\ell=1,\cdots$, we have
	\begin{equation}\label{eq14}
		\begin{aligned}
			&\sqrt{n}\{(\hat{\boldsymbol{\theta}}_{(\ell)}-\boldsymbol{\theta}_0)+\boldsymbol{\Phi}^{-1}S(\boldsymbol{\theta}_0,\boldsymbol{\gamma}_0,\mathbf{w}_0)\}\\
			=&O_P(n^{1/2}(s_1\vee \sqrt{s_1s_2})r_\mathrm{p}^{-1}\log p+n^{1/2}(s_1^3\vee s_1^2s_2)(\log p/r_\mathrm{p})^{3/2}+s_1(\log p/r_\mathrm{p})^{1/2}+s_2\log p/r_\mathrm{p}^{1/2}).
		\end{aligned}
	\end{equation}
	Specifically, if $s_1\log p/n^{1/2}=O(1)$ and $s_1(\log p/r_\mathrm{p})^{1/2}=o(1)$, the error in \eqref{eq14} can be written as $\sqrt{n}\{(\hat{\boldsymbol{\theta}}_{(\ell)}-\boldsymbol{\theta}_0)+\boldsymbol{\Phi}^{-1}S(\boldsymbol{\theta}_0,\boldsymbol{\gamma}_0,\mathbf{w}_0)\}=\sqrt{n}(\eta+\iota_{(\ell)})$, where $\eta=O_P((s_1^2\vee s_1s_2)(\log p/r_\mathrm{p})^{3/2}+(s_1s_2)^{1/2}\log p/r_\mathrm{p}+s_1\log p/(nr_\mathrm{p}^{1/2}))$ does not depend on $\ell$, and $\iota_{(1)}=O_P(s_1^2\log p/r_\mathrm{p})$, $\iota_{(\ell)}=O_P((\eta+n^{-1/2})^2+(\eta+n^{-1/2})(\sqrt{\log p/r_\mathrm{p}}+s_1^2\log p/r_\mathrm{p}))+(\eta+n^{-1/2}+\sqrt{\log p/r_\mathrm{p}}+s_1^2\log p/r_\mathrm{p})\iota_{(\ell-1)}+\iota_{(\ell-1)}^2)$ for $\ell\geq 2$.
	
	Furthermore, under Assumptions (A.1)-(A.5), if $r_\mathrm{p}$ satisfies 
	\begin{equation}\label{eq15}
		\begin{aligned}
			r_\mathrm{p}/(\max\{n^{1/2}(s_1\vee \sqrt{s_1s_2})\log p,n^{1/3}(s_1^2\vee s_1^{\frac{4}{3}}s_2^{\frac{2}{3}})\log p,s_2^2\log^2 p\}\to\infty,
		\end{aligned}
	\end{equation}
	we have $\sqrt{n}(\hat{\boldsymbol{\theta}}_{(\ell)}-\boldsymbol{\theta}_0)\mathop{\to}\limits^{d} N(0,c(\sigma_0)\boldsymbol{\Phi}^{-1})$.
\end{theorem}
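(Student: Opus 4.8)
The plan is to treat the recursion \eqref{eq13} as a fixed-Hessian (modified Newton) scheme applied to the full-data plug-in decorrelated score $\hat{S}_n(\boldsymbol{\theta}):=\frac{1}{n}\sum_{i=1}^n(b'(\boldsymbol{\theta}^\top\boldsymbol{z}_i+\hat{\boldsymbol{\gamma}}_\mathrm{p}^\top\boldsymbol{u}_i)-y_i)(\boldsymbol{z}_i-\hat{\mathbf{w}}_\mathrm{p}\boldsymbol{u}_i)$, in which the preconditioner $\hat{\boldsymbol{\Phi}}_\mathrm{p}$ is held fixed across iterations. Before analyzing the iteration I would first collect the pilot-level estimation bounds that feed into every step: the standard high-dimensional Lasso rates $\|\hat{\boldsymbol{\beta}}_\mathrm{p}-\boldsymbol{\beta}_0\|_1=O_P(s_1\sqrt{\log p/r_\mathrm{p}})$ and $\|\hat{\boldsymbol{\beta}}_\mathrm{p}-\boldsymbol{\beta}_0\|=O_P(\sqrt{s_1\log p/r_\mathrm{p}})$ under (A.1)--(A.3), the analogous column-wise bound for $\hat{\mathbf{w}}_\mathrm{p}-\mathbf{w}_0$ governed by $s_2$, and the induced operator-norm control of $\|\hat{\boldsymbol{\Phi}}_\mathrm{p}-\boldsymbol{\Phi}\|$ obtained by combining the Lipschitz property of $b''$ in (A.4) with the boundedness in (A.3). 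These hold on the pilot subsample by the same restricted-eigenvalue arguments as in the full sample, since the pilot subsample is itself an i.i.d.\ draw of expected size $r_\mathrm{p}$.

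The central step is a second-order Taylor expansion of $\hat{S}_n$ in $\boldsymbol{\theta}$ about $\boldsymbol{\theta}_0$. Writing $\hat{H}_n:=\nabla_{\boldsymbol{\theta}}\hat{S}_n(\boldsymbol{\theta}_0)$ and letting $R_{(\ell-1)}$ denote the quadratic remainder (bounded by $\|\hat{\boldsymbol{\theta}}_{(\ell-1)}-\boldsymbol{\theta}_0\|^2$ via $|b'''|\le C$), substitution into \eqref{eq13} yields
\begin{align*}
\hat{\boldsymbol{\theta}}_{(\ell)}-\boldsymbol{\theta}_0
=-\hat{\boldsymbol{\Phi}}_\mathrm{p}^{-1}\hat{S}_n(\boldsymbol{\theta}_0)
+(\mathbf{I}-\hat{\boldsymbol{\Phi}}_\mathrm{p}^{-1}\hat{H}_n)(\hat{\boldsymbol{\theta}}_{(\ell-1)}-\boldsymbol{\theta}_0)
-\hat{\boldsymbol{\Phi}}_\mathrm{p}^{-1}R_{(\ell-1)}.
\end{align*}
Adding $\boldsymbol{\Phi}^{-1}S(\boldsymbol{\theta}_0,\boldsymbol{\gamma}_0,\mathbf{w}_0)$ to both sides, the first term decomposes into the target $-\boldsymbol{\Phi}^{-1}S(\boldsymbol{\theta}_0,\boldsymbol{\gamma}_0,\mathbf{w}_0)$ plus an $\ell$-independent bias I collect into $\eta$, while the remaining two terms generate the transient $\iota_{(\ell)}$. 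The contraction factor is $\mathbf{I}-\hat{\boldsymbol{\Phi}}_\mathrm{p}^{-1}\hat{H}_n$, whose operator norm I would show is $\kappa:=O_P(\sqrt{\log p/r_\mathrm{p}}+s_1^2\log p/r_\mathrm{p})=o_P(1)$ under (A.4)/(A.6), since $\hat{\boldsymbol{\Phi}}_\mathrm{p}$ and $\hat{H}_n$ both approximate $\boldsymbol{\Phi}$ up to pilot fluctuations and $b''$-Lipschitz biases.

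To bound $\eta$ I would exploit the two orthogonality properties of the decorrelated score recalled in Section~\ref{sec2.1}: expanding $\hat{S}_n(\boldsymbol{\theta}_0)-S(\boldsymbol{\theta}_0,\boldsymbol{\gamma}_0,\mathbf{w}_0)$ in the nuisance errors $\hat{\boldsymbol{\gamma}}_\mathrm{p}-\boldsymbol{\gamma}_0$ and $\hat{\mathbf{w}}_\mathrm{p}-\mathbf{w}_0$, the first-order pieces vanish in the population, leaving the cross product $\|\hat{\boldsymbol{\gamma}}_\mathrm{p}-\boldsymbol{\gamma}_0\|\,\|\hat{\mathbf{w}}_\mathrm{p}-\mathbf{w}_0\|\asymp\sqrt{s_1s_2}\,\log p/r_\mathrm{p}$ together with the higher-order terms and the Hessian-mismatch term $\|\hat{\boldsymbol{\Phi}}_\mathrm{p}-\boldsymbol{\Phi}\|\,\|S\|$; these reproduce the three pieces of $\eta$. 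I would then solve the recursion
\begin{align*}
\|\iota_{(\ell)}\|\lesssim(\|\eta\|+n^{-1/2})^2+\kappa(\|\eta\|+n^{-1/2})+(\|\eta\|+n^{-1/2}+\kappa)\|\iota_{(\ell-1)}\|+\|\iota_{(\ell-1)}\|^2,
\end{align*}
whose base case $\iota_{(1)}=O_P(s_1^2\log p/r_\mathrm{p})$ comes from squaring the $\ell_1$ Lasso remainder $\|\hat{\boldsymbol{\beta}}_\mathrm{p}-\boldsymbol{\beta}_0\|_1^2$; because $\kappa=o_P(1)$, the transient contracts geometrically until it is dominated by the floor $\|\eta\|+n^{-1/2}$, which is exactly \eqref{eq14}.

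The conclusion follows by checking that condition \eqref{eq15} forces every component of $\eta$, together with $\iota_{(\ell)}$, to be $o(n^{-1/2})$, so that $\sqrt{n}(\hat{\boldsymbol{\theta}}_{(\ell)}-\boldsymbol{\theta}_0)=-\sqrt{n}\,\boldsymbol{\Phi}^{-1}S(\boldsymbol{\theta}_0,\boldsymbol{\gamma}_0,\mathbf{w}_0)+o_P(1)$; a CLT for the i.i.d.\ sum $\sqrt{n}\,S(\boldsymbol{\theta}_0,\boldsymbol{\gamma}_0,\mathbf{w}_0)$, whose asymptotic covariance is the block $\boldsymbol{V}_{11}=c(\sigma_0)\boldsymbol{\Phi}$ from Theorem~\ref{thm1}, then yields $N(\mathbf 0,c(\sigma_0)\boldsymbol{\Phi}^{-1})$ after premultiplying by $\boldsymbol{\Phi}^{-1}$. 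I expect the main obstacle to be the recursion bookkeeping of the preceding paragraph: tracking the interplay between $\ell_1$-norm control of the sparse pilot and nuisance errors and $\ell_2$-norm control of the low-dimensional iterate, and proving uniformly over $\ell$ that $\kappa$ stays $o_P(1)$ so that the geometric decay of $\iota_{(\ell)}$ is not overwhelmed by the quadratic remainder. The orthogonality argument securing the second-order (rather than first-order) dependence of $\eta$ on the nuisance errors is the other delicate point.
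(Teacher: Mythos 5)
Your proposal follows essentially the same route as the paper's proof: a second-order Taylor expansion of the plug-in decorrelated score about $\boldsymbol{\theta}_0$, a split of the error into an $\ell$-independent nuisance-driven part $\eta$ (second order in $\hat{\boldsymbol{\gamma}}_\mathrm{p}-\boldsymbol{\gamma}_0$ and $\hat{\mathbf{w}}_\mathrm{p}-\mathbf{w}_0$ via the decorrelation orthogonality) and a transient $\iota_{(\ell)}$ governed by the contraction factor $\mathbf{I}-\hat{\boldsymbol{\Phi}}_\mathrm{p}^{-1}\hat{H}_n$, with the same recursion coefficients $\sqrt{\log p/r_\mathrm{p}}+s_1^2\log p/r_\mathrm{p}$ and the same CLT step for $\sqrt{n}\,S(\boldsymbol{\theta}_0,\boldsymbol{\gamma}_0,\mathbf{w}_0)$. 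The fixed-Hessian Newton framing is merely a more compact packaging of the paper's explicit term-by-term bounds \eqref{eqA7}--\eqref{eqA11}, and the argument is correct.
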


From Theorem \ref{thm3}, it follows that under some conditions and if $\iota_{(1)}=o_P(1),\eta+n^{-1/2}+\sqrt{\log p/r_\mathrm{p}}+s_1^2\log p/r_\mathrm{p}=o_P(1)$, the error of the Bahadur representation decreases as $\ell$ increases. The error of the Bahadur representation of $\hat{\boldsymbol{\theta}}_{(\ell)}$ will finally be dominated by $\eta$ and the terms in $\iota_{(\ell)}$ that does not depend on $\iota_{(\ell-1)}$. Therefore, we let the iteration proceed until the relative change in the estimation becomes sufficiently small, that is, when the ratio $\left(\mbox{error}_{(\ell)}-\mbox{error}_{(\ell-1)}\right)/\mbox{error}_{(\ell-1)}$ falls below a predefined threshold. Here, $\mbox{error}_{(\ell)}=\|\hat{\boldsymbol{\theta}}_{(\ell)}-\hat{\boldsymbol{\theta}}_{(\ell-1)}\|$  denotes the stepwise estimation discrepancy between successive iterates. Similar to Theorem \ref{thm2}, Theorem \ref{thm3} also implies the semiparametric efficiency of the multi-step estimator $\hat{\boldsymbol{\theta}}_{(\ell)}$ ($\ell\geq 1$). We summarize the procedure for obtaining the multi-step estimator and constructing confidence intervals in Algorithm \ref{alg3}.
\begin{algorithm}[!htbp]
	\SetAlgoLined
	{\bf Subsampling:} Run Algorithm \ref{alg1} with expected subsample size $r_\mathrm{p}$ to get a subsample $\mathcal{K}_\mathrm{p}$, and attain  $\hat{\boldsymbol{\beta}}_{\mathrm{p}}=(\hat{\boldsymbol{\theta}}_{\mathrm{p}}^\top,\hat{\boldsymbol{\gamma}}_{\mathrm{p}}^\top)^\top$ and $\hat{\mathbf{w}}_\mathrm{p}$ via \eqref{eq1}-\eqref{eq2}.
	
	{\bf Multi-step Estimation:} Set $\hat{\boldsymbol{\theta}}_{(0)}=\hat{\boldsymbol{\theta}}_{\mathrm{p}}$, $\mbox{error}_{(0)}=1$.
	
	\For{$\ell=1,\cdots,\text{maxiter}$}{
		Calculate the multi-step estimator $\hat{\boldsymbol{\theta}}_{(\ell)}$ according to \eqref{eq13}. Denote $\mbox{error}_{(\ell)}=\|\hat{\boldsymbol{\theta}}_{(\ell)}-\hat{\boldsymbol{\theta}}_{(\ell-1)}\|$.
		
		\textbf{Break if} $\left(\mbox{error}_{(\ell)}-\mbox{error}_{(\ell-1)}\right)/\mbox{error}_{(\ell-1)}<10^{-3}$. Let $\hat{\boldsymbol{\theta}}_{(end)}=\hat{\boldsymbol{\theta}}_{(\ell)}$.
	}
	
	{\bf Confidence interval construction:} Denote the $(j,j)$-th entry of $\hat{\boldsymbol{\Phi}}_\mathrm{p}^{-1}$ as $\nu_{j,j}$. For $j=1,\cdots,d$, construct the confidence interval of the $j$-th component of $\boldsymbol{\theta}_0$ with confidence level $1-\alpha$ as $[\hat{\boldsymbol{\theta}}_{(end)}-\nu_{j,j}\rho_{1-\frac{\alpha}{2}}/\sqrt{n},\hat{\boldsymbol{\theta}}_{(end)}+\nu_{j,j}\rho_{1-\frac{\alpha}{2}}/\sqrt{n}]$, where $\rho_{1-\frac{\alpha}{2}}$ is the $1-\frac{\alpha}{2}$-th quantile of the standard normal distribution.
	\caption{Confidence interval construction via the multi-step estimator}\label{alg3}
\end{algorithm}

We now turn to the choice of $r_\mathrm{p}$. Equation \eqref{eq14} indicates that the remainder term in the Bahadur representation is asymptotically negligible, and valid confidence intervals may be constructed via Algorithm \ref{alg3}, provided that $r_{\mathrm{p}}\gg \max\{n^{1/2}(s_1\vee(s_1s_2)^{1/2})\log p,n^{1/3}(s_1^2\vee s_1^{4/3}s_2^{2/3})\log p,s_2^2\log^2 p\}$. In practice, we recommend choosing $r_{\mathrm{p}}= n/C$ for some constant $C$ (e.g., $C=5$) to ensure computational efficiency while preserving inferential validity.


\section{High-dimensional simultaneous inference}\label{sec4}
This section focuses on a high-dimensional parameter vector $\boldsymbol{\theta}\in \mathbb{R}^d$ of scientific interest, where the dimension $d$ may grow with the sample size. This setting arises in many modern applications, such as genomics, where one jointly tests the effects of many genes, or econometrics, where the goal is to assess a large set of policy variables. In such cases, standard coordinate-wise confidence intervals fail to provide a coherent statistical assessment, as the overall confidence level for jointly covering all parameters deteriorates rapidly due to multiple testing-a “curse of dimensionality’’ for inference. Beyond this statistical challenge, the computational cost can become intractable as both $n$ and $p$ grow. Therefore, we develop a scalable framework for simultaneous inference on high-dimensional target parameters by deriving a uniform Bahadur representation for our debiased estimator based on the decorrelated score function, and then applying the multiplier bootstrap for high-dimensional simultaneous inference. This approach enables the construction of simultaneous confidence regions that remain both theoretically valid and computationally efficient even as $d$ diverges.

Motivated by the debiased Lasso \citep{javanmard2014confidence,van2014asymptotically,zhang2014confidence} and the decorrelated score function \citep{ning2017general,cheng2022regularized,fang2020test}, the crucial step for attaining the unbiased parameter is to construct an approximation of the inverse Hessian matrix, which serves a similar role to the estimation of $\boldsymbol{\Phi}^{-1}$ in \eqref{eq6}. When the dimension of parameters of interest is diverge, that is, $d$ is also large, $(\nabla_{\boldsymbol{\theta}}S^*(\hat{\boldsymbol{\theta}}_{uni},\hat{\boldsymbol{\gamma}}_\mathrm{p},\hat{\mathbf{w}}_\mathrm{p}))^{-1}$ in \eqref{eq6} or $(\nabla_{\boldsymbol{\theta}}S^*(\hat{\boldsymbol{\theta}}_{\mathrm{p}},\hat{\boldsymbol{\gamma}}_\mathrm{p},\hat{\mathbf{w}}_\mathrm{p}))^{-1}$ in \eqref{eq13} may not be a good approximation of $\boldsymbol{\Phi}^{-1}$. Therefore, we adopt the approach in \citet{cai2011constrained,yan2023confidence} to attain the estimator of $\boldsymbol{\Phi}^{-1}$, denoted as $\hat{\mathbf{G}}$, via the following convex program:
\begin{align}\label{eq16}
	\min_{\mathbf{G}\in\mathbb{R}^{d\times d}}\|\mathbf{G}\|_{1,1},\quad \text{s.t.}\quad \|\mathbf{G}\check{\boldsymbol{\Phi}}_\mathrm{p}-\mathbf{I}\|_{\infty}\leq\gamma_{n},
\end{align}
where $\check{\boldsymbol{\Phi}}_\mathrm{p}=({1}/{r_\mathrm{p}})\sum_{i\in\mathcal{K}_\mathrm{p}}b''(\hat{\boldsymbol{\beta}}_{\mathrm{p}}^\top \boldsymbol{x}_{i})\left(\boldsymbol{z}_i-\hat{\mathbf{w}}_\mathrm{p}\boldsymbol{u}_i\right)\left(\boldsymbol{z}_i-\hat{\mathbf{w}}_\mathrm{p}\boldsymbol{u}_i\right)^\top$ is a symmetric covariance-type estimator and is still a sample version of $\boldsymbol{\Phi}$ based on the subsample $\mathcal{K}_\mathrm{p}$ and utilizing these initial estimators $\hat{\boldsymbol{\beta}}_{\mathrm{p}},\hat{\mathbf{w}}_\mathrm{p}$, and $\gamma_n$ is a predetermined tuning parameter. While $\boldsymbol{\Phi}$ is symmetric as illustrated in Section \ref{sec3.1}, $\hat{\mathbf{G}}=(\hat{g}_{i,j})_{1\leq i,j\leq d}$ is not symmetric typically. To address this problem, one can define $\tilde{\mathbf{G}}=(\tilde{g}_{i,j})_{1\leq i,j\leq d}$, where $\tilde{g}_{i,j}=\hat{g}_{i,j}I(|\hat{g}_{i,j}|\leq|\hat{g}_{j,i}|)+\hat{g}_{j,i}I(|\hat{g}_{j,i}|<|\hat{g}_{i,j}|)$. Therefore, we assume that $\hat{\mathbf{G}}$ is symmetric without loss of generality in our work. Then we propose the debiased estimator $\check{\boldsymbol{\theta}}$ as follows: 
\begin{equation}
	\begin{aligned}\label{eq17}
		\check{\boldsymbol{\theta}}=&\hat{\boldsymbol{\theta}}_{\mathrm{p}}-\hat{\mathbf{G}}S(\hat{\boldsymbol{\theta}}_{\mathrm{p}},\hat{\boldsymbol{\gamma}}_\mathrm{p},\hat{\mathbf{w}}_\mathrm{p})=\hat{\boldsymbol{\theta}}_{\mathrm{p}}-\hat{\mathbf{G}}\frac{1}{n}\sum_{i=1}^{n}\left(b'(\hat{\boldsymbol{\beta}}_{\mathrm{p}}^\top \boldsymbol{x}_{i})-y_i\right)\left(\boldsymbol{z}_i-\hat{\mathbf{w}}_\mathrm{p}\boldsymbol{u}_i\right).
	\end{aligned}
\end{equation}
To further develop the asymptotic behavior of $\check{\boldsymbol{\theta}}$, another assumption is needed.\\
\noindent
(B.1) Assume that $\boldsymbol{\Phi}^{-1}=(\mathbf{g}_{1},\cdots,\mathbf{g}_{d})^\top=(g_{i,j})_{1\leq i,j\leq d}$ is row-wisely sparse, i.e., $\max_{1\leq i\leq d}\\\sum_{j=1}^{d}|g_{ij}|^q<C_g$ for some $q\in[0,1)$ and positive constant $C_g$. The tuning parameter $\gamma_n$ in \eqref{eq16} is set as $\gamma_n\asymp (s_1\vee \sqrt{s_2})(\log p/r_\mathrm{p})^{1/2}$.

This assumption naturally holds for fixed $d$ and requires $\boldsymbol{\Phi}$ to be sparse in the sense of the $\ell_q$ norm of matrix row space if $d$ diverges. Similar conditions can be seen in \citet{cai2011constrained,yan2023confidence,ning2017general,cai2025statistical}. It is the most common matrix sparsity assumption under the case of $q=0$. More specifically, our sparsity assumptions on $\boldsymbol{\Phi}^{-1}$ (Assumption (B.1)) and $\mathbf{w}_0$ (Assumption (A.2)) can be weaker than those sparsity functions on debiased lasso \citep{van2014asymptotically} as they can be deduced from the sparsity of $(\mathbb{E}[b''(\boldsymbol{\beta}_0^\top\boldsymbol{x})\boldsymbol{x}\boldsymbol{x}^\top])^{-1}$, as illustrated in Remark \ref{remark3.1}. Based on Assumption (B.1), the estimation error $\|\hat{\mathbf{G}}-\boldsymbol{\Phi}^{-1}\|_{L_1}$ can then be established. In the following theorem, we present the uniform Bahadur representation for $\check{\boldsymbol{\theta}}$ under the case of $q=0$ for convenience, and the whole analysis is exhibited in the supplementary materials.

\begin{theorem}\label{thm4}
	Suppose Assumptions (A.1)-(A.5) and (B.1) hold under the case of $q=0$, if $n$ satisfies $(s_1\vee s_2)\sqrt{\log p/n}=o(1)$, then 
	\begin{align*}
		\|\sqrt{n}\{(\check{\boldsymbol{\theta}}-\boldsymbol{\theta}_0)+\boldsymbol{\Phi}^{-1}S(\boldsymbol{\theta}_0,\boldsymbol{\gamma}_0,\mathbf{w}_0)\}\|_\infty=O_P\left(n^{1/2}s_1(s_1\vee \sqrt{s_2})(\log p/r_\mathrm{p})+s_2\log p/r_\mathrm{p}^{1/2}\right).
	\end{align*}
\end{theorem}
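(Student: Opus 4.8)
The plan is to start from the definition \eqref{eq17} of $\check{\boldsymbol{\theta}}$ and perform a one-term Taylor expansion of the full-data score $S(\hat{\boldsymbol{\theta}}_{\mathrm{p}},\hat{\boldsymbol{\gamma}}_\mathrm{p},\hat{\mathbf{w}}_\mathrm{p})$ in its first argument around $\boldsymbol{\theta}_0$. Writing $\hat{\boldsymbol{\Phi}}=\int_0^1\nabla_{\boldsymbol{\theta}}S(\boldsymbol{\theta}_0+t(\hat{\boldsymbol{\theta}}_{\mathrm{p}}-\boldsymbol{\theta}_0),\hat{\boldsymbol{\gamma}}_\mathrm{p},\hat{\mathbf{w}}_\mathrm{p})\,dt$ for the integrated Hessian, this produces the exact identity
\begin{align*}
&\check{\boldsymbol{\theta}}-\boldsymbol{\theta}_0+\boldsymbol{\Phi}^{-1}S(\boldsymbol{\theta}_0,\boldsymbol{\gamma}_0,\mathbf{w}_0)\\
&=\underbrace{(\mathbf{I}-\hat{\mathbf{G}}\hat{\boldsymbol{\Phi}})(\hat{\boldsymbol{\theta}}_{\mathrm{p}}-\boldsymbol{\theta}_0)}_{T_1}+\underbrace{\boldsymbol{\Phi}^{-1}S(\boldsymbol{\theta}_0,\boldsymbol{\gamma}_0,\mathbf{w}_0)-\hat{\mathbf{G}}S(\boldsymbol{\theta}_0,\hat{\boldsymbol{\gamma}}_\mathrm{p},\hat{\mathbf{w}}_\mathrm{p})}_{T_2}.
\end{align*}
I would then bound $\|T_1\|_\infty$ and $\|T_2\|_\infty$ separately, showing (after scaling by $\sqrt n$) that $T_1$ yields the first target term $n^{1/2}s_1(s_1\vee\sqrt{s_2})\log p/r_\mathrm{p}$ and $T_2$ yields the second term $s_2\log p/r_\mathrm{p}^{1/2}$ together with strictly dominated remainders. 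Throughout I rely on the pilot estimation rates $\|\hat{\boldsymbol{\beta}}_\mathrm{p}-\boldsymbol{\beta}_0\|_1=O_P(s_1\sqrt{\log p/r_\mathrm{p}})$ and $\max_k\|(\hat{\mathbf{w}}_\mathrm{p}-\mathbf{w}_0)_k\|_1=O_P(s_2\sqrt{\log p/r_\mathrm{p}})$ from \eqref{eq1}--\eqref{eq2}, together with the CLIME conclusions $\|\hat{\mathbf{G}}\|_{L_1}=O_P(1)$ and $\|\hat{\mathbf{G}}-\boldsymbol{\Phi}^{-1}\|_{L_1}=O_P(\gamma_n)$ obtained from Assumption (B.1) with $q=0$; the scaling condition $(s_1\vee s_2)\sqrt{\log p/n}=o(1)$ keeps the plug-in estimators in the neighborhood where (A.4) applies and guarantees feasibility of \eqref{eq16}.

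For $T_1$ I use $\|T_1\|_\infty\le\|\mathbf{I}-\hat{\mathbf{G}}\hat{\boldsymbol{\Phi}}\|_\infty\,\|\hat{\boldsymbol{\theta}}_{\mathrm{p}}-\boldsymbol{\theta}_0\|_1$. The matrix factor is split as $\mathbf{I}-\hat{\mathbf{G}}\hat{\boldsymbol{\Phi}}=(\mathbf{I}-\hat{\mathbf{G}}\check{\boldsymbol{\Phi}}_\mathrm{p})+\hat{\mathbf{G}}(\check{\boldsymbol{\Phi}}_\mathrm{p}-\hat{\boldsymbol{\Phi}})$; the first piece is at most $\gamma_n$ by the feasibility constraint in \eqref{eq16}, and the second is bounded by $\|\hat{\mathbf{G}}\|_{L_1}\|\check{\boldsymbol{\Phi}}_\mathrm{p}-\hat{\boldsymbol{\Phi}}\|_\infty$. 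Using the sup-norm deviation bounds $\|\check{\boldsymbol{\Phi}}_\mathrm{p}-\boldsymbol{\Phi}\|_\infty,\ \|\hat{\boldsymbol{\Phi}}-\boldsymbol{\Phi}\|_\infty=O_P(\gamma_n)$, which the tuning choice $\gamma_n\asymp(s_1\vee\sqrt{s_2})\sqrt{\log p/r_\mathrm{p}}$ is designed to dominate, gives $\|\mathbf{I}-\hat{\mathbf{G}}\hat{\boldsymbol{\Phi}}\|_\infty=O_P(\gamma_n)$. Multiplying by $\|\hat{\boldsymbol{\theta}}_{\mathrm{p}}-\boldsymbol{\theta}_0\|_1$ and rescaling by $\sqrt n$ produces exactly $n^{1/2}s_1(s_1\vee\sqrt{s_2})\log p/r_\mathrm{p}$.

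For $T_2$ I write $T_2=(\boldsymbol{\Phi}^{-1}-\hat{\mathbf{G}})S(\boldsymbol{\theta}_0,\boldsymbol{\gamma}_0,\mathbf{w}_0)+\hat{\mathbf{G}}\{S(\boldsymbol{\theta}_0,\boldsymbol{\gamma}_0,\mathbf{w}_0)-S(\boldsymbol{\theta}_0,\hat{\boldsymbol{\gamma}}_\mathrm{p},\hat{\mathbf{w}}_\mathrm{p})\}$. The first summand is controlled by $\|\boldsymbol{\Phi}^{-1}-\hat{\mathbf{G}}\|_{L_1}\|S(\boldsymbol{\theta}_0,\boldsymbol{\gamma}_0,\mathbf{w}_0)\|_\infty=O_P(\gamma_n\sqrt{\log p/n})$, which after scaling is strictly dominated. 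For the second summand, set $\Delta_{\boldsymbol{\gamma}}=\hat{\boldsymbol{\gamma}}_\mathrm{p}-\boldsymbol{\gamma}_0$, $\Delta_{\mathbf{w}}=\hat{\mathbf{w}}_\mathrm{p}-\mathbf{w}_0$, and $\epsilon_i=y_i-b'(\boldsymbol{\beta}_0^\top\boldsymbol{x}_i)$; expanding $b'(\boldsymbol{\theta}_0^\top\boldsymbol{z}_i+\hat{\boldsymbol{\gamma}}_\mathrm{p}^\top\boldsymbol{u}_i)$ about $\boldsymbol{\gamma}_0$ and the weight $\boldsymbol{z}_i-\hat{\mathbf{w}}_\mathrm{p}\boldsymbol{u}_i$ about $\mathbf{w}_0$ produces: (i) the mean-zero term $\frac1n\sum_i\epsilon_i\Delta_{\mathbf{w}}\boldsymbol{u}_i$, bounded by $\|\Delta_{\mathbf{w}}\|_{L_1}\|\frac1n\sum_i\epsilon_i\boldsymbol{u}_i\|_\infty=O_P(s_2\log p/\sqrt{nr_\mathrm{p}})$, which after scaling gives precisely $s_2\log p/r_\mathrm{p}^{1/2}$; (ii) the term $(\frac1n\sum_i b''(\boldsymbol{\beta}_0^\top\boldsymbol{x}_i)(\boldsymbol{z}_i-\mathbf{w}_0\boldsymbol{u}_i)\boldsymbol{u}_i^\top)\Delta_{\boldsymbol{\gamma}}$, whose population mean is $\mathbf{0}$ by the defining normal equation of $\mathbf{w}_0$ (Neyman orthogonality), giving $O_P(s_1\log p/\sqrt{nr_\mathrm{p}})$; and (iii) the quadratic cross term $\frac1n\sum_i b''(\boldsymbol{\beta}_0^\top\boldsymbol{x}_i)(\Delta_{\boldsymbol{\gamma}}^\top\boldsymbol{u}_i)(\Delta_{\mathbf{w}}\boldsymbol{u}_i)$ plus the second-order $\boldsymbol{\gamma}$-residual. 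The crucial refinement is to bound (iii) by Cauchy--Schwarz rather than a crude $\ell_1$--$\ell_\infty$ inequality: using $\frac1n\sum_i(\Delta_{\boldsymbol{\gamma}}^\top\boldsymbol{u}_i)^2\lesssim\|\Delta_{\boldsymbol{\gamma}}\|_2^2=O_P(s_1\log p/r_\mathrm{p})$ and the analogous $\ell_2$ control for $\Delta_{\mathbf{w}}$ gives $O_P(\sqrt{s_1s_2}\log p/r_\mathrm{p})$, which after scaling is dominated by the first target term, whereas the naive $\ell_1$ bound $s_1s_2\log p/r_\mathrm{p}$ would be too large. Since $\|\hat{\mathbf{G}}\|_{L_1}=O_P(1)$, pre-multiplying $T_2$ by $\hat{\mathbf{G}}$ preserves all rates.

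I expect the principal obstacle to be twofold. First, the factor $\|\mathbf{I}-\hat{\mathbf{G}}\hat{\boldsymbol{\Phi}}\|_\infty$ is awkward because the feasibility constraint in \eqref{eq16} controls $\hat{\mathbf{G}}\check{\boldsymbol{\Phi}}_\mathrm{p}-\mathbf{I}$ with the \emph{symmetric pilot-subsample} matrix $\check{\boldsymbol{\Phi}}_\mathrm{p}$, while the expansion inevitably produces the \emph{asymmetric full-sample} integrated Hessian $\hat{\boldsymbol{\Phi}}$; reconciling them requires showing $\|\check{\boldsymbol{\Phi}}_\mathrm{p}-\hat{\boldsymbol{\Phi}}\|_\infty=O_P(\gamma_n)$, which must simultaneously absorb the pilot-versus-full sampling discrepancy, the plug-in errors of $\hat{\boldsymbol{\beta}}_\mathrm{p},\hat{\mathbf{w}}_\mathrm{p}$ through the local-Lipschitz conditions in (A.4), and the symmetric/asymmetric mismatch (the latter harmless because $\mathbb{E}[b''(\boldsymbol{\beta}_0^\top\boldsymbol{x})(\boldsymbol{z}-\mathbf{w}_0\boldsymbol{u})\boldsymbol{u}^\top]=\mathbf{0}$). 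Second, since $\hat{\boldsymbol{\gamma}}_\mathrm{p},\hat{\mathbf{w}}_\mathrm{p}$ are estimated on $\mathcal{K}_\mathrm{p}\subseteq\{1,\dots,n\}$ while $S$ averages over the full sample, the empirical processes in (ii)--(iii) are not independent of the plug-in errors; this dependence is handled by replacing the relevant averages with uniform sup-norm deviation bounds valid simultaneously over the sparse estimation directions, leaving the stated rate unchanged. Verifying the CLIME bounds from Assumption (B.1)—which hinges on showing $\boldsymbol{\Phi}^{-1}$ is feasible for \eqref{eq16} with high probability via $\|\boldsymbol{\Phi}^{-1}(\check{\boldsymbol{\Phi}}_\mathrm{p}-\boldsymbol{\Phi})\|_\infty\le\gamma_n$—is the remaining self-contained ingredient and is otherwise routine.
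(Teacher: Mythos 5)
Your proposal is correct and reaches the stated rates, but it organizes the argument differently from the paper. The paper splits $\check{\boldsymbol{\theta}}-\boldsymbol{\theta}_0$ as $I_1+I_2$ with $I_1=\hat{\boldsymbol{\theta}}_\mathrm{p}-\boldsymbol{\theta}_0-\boldsymbol{\Phi}^{-1}S(\hat{\boldsymbol{\theta}}_\mathrm{p},\hat{\boldsymbol{\gamma}}_\mathrm{p},\hat{\mathbf{w}}_\mathrm{p})$ and $I_2=(\boldsymbol{\Phi}^{-1}-\hat{\mathbf{G}})S(\hat{\boldsymbol{\theta}}_\mathrm{p},\hat{\boldsymbol{\gamma}}_\mathrm{p},\hat{\mathbf{w}}_\mathrm{p})$: the CLIME error enters only through $I_2$, multiplied against the \emph{score at the pilot estimate} ($\|S(\hat{\boldsymbol{\theta}}_\mathrm{p},\cdot,\cdot)\|_\infty=O_P(s_1\sqrt{\log p/r_\mathrm{p}})$), while $I_1$ is handled by a second-order Taylor expansion in the full $\boldsymbol{\beta}$-coordinate whose quadratic term is bounded exactly as in your item (iii)/(iv). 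You instead use the debiased-Lasso-style exact identity with the integrated Hessian, so the CLIME/feasibility error enters through $(\mathbf{I}-\hat{\mathbf{G}}\hat{\boldsymbol{\Phi}})$ multiplied against the \emph{pilot estimation error} $\|\hat{\boldsymbol{\theta}}_\mathrm{p}-\boldsymbol{\theta}_0\|_1$, which is of the same magnitude, so both routes yield the leading term $n^{1/2}s_1(s_1\vee\sqrt{s_2})\log p/r_\mathrm{p}$; and both extract the $s_2\log p/r_\mathrm{p}^{1/2}$ term from the mean-zero $\hat{\mathbf{w}}_\mathrm{p}$-perturbation of the score and kill the $\Delta_{\boldsymbol{\gamma}}$-linear term by the Neyman orthogonality $\mathbb{E}[b''(\boldsymbol{\beta}_0^\top\boldsymbol{x})(\boldsymbol{z}-\mathbf{w}_0\boldsymbol{u})\boldsymbol{u}^\top]=\mathbf{0}$. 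Your route buys an exact (remainder-free) linearization in $\boldsymbol{\theta}$ and makes the role of the constraint in \eqref{eq16} more transparent, at the price of the extra reconciliation step $\|\check{\boldsymbol{\Phi}}_\mathrm{p}-\hat{\boldsymbol{\Phi}}\|_\infty=O_P(\gamma_n)$ (pilot-versus-full sample, plug-in drift via (A.4), and the symmetric/asymmetric mismatch), which you correctly identify as the main obstacle and which does go through for exactly the reason you give; the paper's route avoids this by never forming $\hat{\mathbf{G}}\hat{\boldsymbol{\Phi}}$, needing only the one-sided statements $\|\hat{\mathbf{G}}-\boldsymbol{\Phi}^{-1}\|_{L_1}=O_P(\gamma_n)$ and $\|\hat{\mathbf{G}}\boldsymbol{\Phi}-\mathbf{I}\|_\infty\lesssim\gamma_n$ from its Lemmas A3--A4. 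Your Cauchy--Schwarz treatment of the cross term, giving $\sqrt{s_1s_2}\log p/r_\mathrm{p}$ rather than the too-crude $s_1s_2\log p/r_\mathrm{p}$, coincides with what the paper does in its bound \eqref{eqA16} and is indeed necessary for the stated rate.
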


Theorem \ref{thm4} illustrates the uniform Bahadur representation for $\check{\boldsymbol{\theta}}$, implying that $\sqrt{n}(\check{\boldsymbol{\theta}}-\boldsymbol{\theta}_0)$ can be expressed by a linear transform of $S(\boldsymbol{\theta}_0,\boldsymbol{\gamma}_0,\mathbf{w}_0)$ up to some ignorable terms. It also exhibits its semiparametric efficiency. If we choose $r_\mathrm{p}=n$, i.e., all samples are utilized, then $s_1(s_1\vee\sqrt{s_2})\log p/n^{1/2}=o(1)$ and $s_2^2\log^2(p)/n=o(1)$ is required for the smallest sample size $n$ to make the residuals to be ignorable. Moreover, the subsampling technique is employed to alleviate computational cost if $n$ is large.

Now we aim to construct simultaneous confidence intervals for $\boldsymbol{\theta}$. From Theorem \ref{thm4}, it is natural to derive that the asymptotic distribution of $\|\sqrt{n}(\check{\boldsymbol{\theta}}-\boldsymbol{\theta})\|_\infty$ can be approximated by the maximum norm of a Gaussian random vector with mean $\mathbf{0}$ and variance $c(\sigma_0)\boldsymbol{\Phi}^{-1}$. This is formally stated in the following proposition.
\begin{proposition}\label{prop3}
	Suppose that Assumptions (A.1)-(A.5) and (B.1) hold under the case of $q=0$, if $(s_1\vee s_2)\sqrt{\log p/n}=o(1)$, $n^{1/2}s_1(s_1\vee \sqrt{s_2})(\log p/r_\mathrm{p})=o(1)$, $s_2\log p/r_\mathrm{p}^{1/2}=o(1)$, $\log^7 (dn)/n=o(1)$, then there exists a $d$-dimensional Gaussian random vector $\boldsymbol{Q}_n$ with mean $\mathbf{0}$ and variance $c(\sigma_0)\boldsymbol{\Phi}^{-1}$ such that 
	$$\sup_{t\in\mathbb{R}}\left|\mathbb{P}\left(\|\sqrt{n}(\check{\boldsymbol{\theta}}-\boldsymbol{\theta})\|_\infty\leq t\right)-\mathbb{P}(\|\boldsymbol{Q}_n\|_\infty\leq t)\right|=o(1).$$
\end{proposition}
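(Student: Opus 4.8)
The plan is to combine the uniform Bahadur representation from Theorem \ref{thm4} with a high-dimensional Gaussian approximation theorem, following the standard pipeline for simultaneous inference. First I would write $\sqrt{n}(\check{\boldsymbol{\theta}}-\boldsymbol{\theta}_0)=-\sqrt{n}\boldsymbol{\Phi}^{-1}S(\boldsymbol{\theta}_0,\boldsymbol{\gamma}_0,\mathbf{w}_0)+\mathbf{R}_n$, where by Theorem \ref{thm4} the remainder satisfies $\|\mathbf{R}_n\|_\infty=O_P(n^{1/2}s_1(s_1\vee\sqrt{s_2})(\log p/r_\mathrm{p})+s_2\log p/r_\mathrm{p}^{1/2})=o_P(1)$ under the stated rate conditions. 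The leading term is a normalized sum of $i.i.d.$ mean-zero vectors: setting $\boldsymbol{\xi}_i=-\boldsymbol{\Phi}^{-1}(b'(\boldsymbol{\beta}_0^\top\boldsymbol{x}_i)-y_i)(\boldsymbol{z}_i-\mathbf{w}_0\boldsymbol{u}_i)$, we have $-\sqrt{n}\boldsymbol{\Phi}^{-1}S(\boldsymbol{\theta}_0,\boldsymbol{\gamma}_0,\mathbf{w}_0)=n^{-1/2}\sum_{i=1}^n\boldsymbol{\xi}_i$, whose covariance is $c(\sigma_0)\boldsymbol{\Phi}^{-1}$ by the variance computation underlying $\boldsymbol{V}_{11}$ in Theorem \ref{thm1}.

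The core step is to invoke the high-dimensional central limit theorem over hyperrectangles (\citep{chernozhukov2023high}) to approximate $\mathbb{P}(\|n^{-1/2}\sum_{i=1}^n\boldsymbol{\xi}_i\|_\infty\le t)$ by $\mathbb{P}(\|\boldsymbol{Q}_n\|_\infty\le t)$ uniformly in $t$, where $\boldsymbol{Q}_n\sim N(\mathbf{0},c(\sigma_0)\boldsymbol{\Phi}^{-1})$. To apply this result I would verify its moment and tail requirements for the summands $\boldsymbol{\xi}_i$: Assumptions (A.3) and (A.4) guarantee $\|\boldsymbol{z}-\mathbf{w}_0\boldsymbol{u}\|_\infty$ is bounded and $y-b'(\boldsymbol{\beta}_0^\top\boldsymbol{x})$ is sub-exponential, while (A.5) bounds $\|\boldsymbol{\Phi}^{-1}\|$, so each coordinate of $\boldsymbol{\xi}_i$ is sub-exponential with uniformly bounded norm and its variance is bounded below (again by (A.1) and (A.5)). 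These conditions, together with $\log^7(dn)/n=o(1)$, are exactly what the Gaussian approximation theorem needs, yielding $\sup_t|\mathbb{P}(\|n^{-1/2}\sum_i\boldsymbol{\xi}_i\|_\infty\le t)-\mathbb{P}(\|\boldsymbol{Q}_n\|_\infty\le t)|=o(1)$.

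The remaining step is to transfer this approximation from the leading term to $\sqrt{n}(\check{\boldsymbol{\theta}}-\boldsymbol{\theta}_0)$ by absorbing the negligible remainder $\mathbf{R}_n$. Here I would use an anti-concentration argument: since $\boldsymbol{Q}_n$ has bounded-below marginal variances, its maximum-norm distribution admits a Lévy-type anti-concentration bound, so for any sequence $\epsilon_n\to 0$ one has $\sup_t\mathbb{P}(|\|\boldsymbol{Q}_n\|_\infty-t|\le\epsilon_n)=o(1)$. Combining $\|\mathbf{R}_n\|_\infty=o_P(1)$ with this anti-concentration via the standard sandwich inequality $\mathbb{P}(\|n^{-1/2}\sum_i\boldsymbol{\xi}_i\|_\infty\le t-\epsilon_n)-\mathbb{P}(\|\mathbf{R}_n\|_\infty>\epsilon_n)\le\mathbb{P}(\|\sqrt{n}(\check{\boldsymbol{\theta}}-\boldsymbol{\theta}_0)\|_\infty\le t)\le\mathbb{P}(\|n^{-1/2}\sum_i\boldsymbol{\xi}_i\|_\infty\le t+\epsilon_n)+\mathbb{P}(\|\mathbf{R}_n\|_\infty>\epsilon_n)$ completes the argument. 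I expect the main obstacle to be the careful bookkeeping in this last transfer: one must choose $\epsilon_n$ to decay slower than the remainder's order yet fast enough that the anti-concentration gap vanishes, and the anti-concentration constant depends only mildly on $d$ (logarithmically), which must be checked to be compatible with the $\log^7(dn)/n=o(1)$ budget rather than reintroducing a dimension-dependent blow-up.
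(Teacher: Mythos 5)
Your proposal is correct and follows essentially the same route as the paper: a Bahadur decomposition from Theorem \ref{thm4}, the high-dimensional Gaussian approximation for the i.i.d.\ leading term (the paper invokes Corollary 2.1 of \cite{chernozhukov2013gaussian}, verifying the same sub-exponential and variance-lower-bound conditions from (A.1), (A.3)--(A.5)), and an anti-concentration plus sandwich argument to absorb the $o_P(1)$ remainder (the paper uses Nazarov's inequality together with a transfer lemma from \cite{cai2025statistical}, which is exactly your sandwich inequality). The bookkeeping concern you flag at the end is resolved in the paper by taking the anti-concentration window of the form $\delta\log^{-1/2}(p)$, which makes the gap $O(\delta)$ uniformly in $d$.
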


However, when $d$ is large, it is still unclear whether the maximum norm of $\boldsymbol{Q}_n$ can be reliably estimated by sampling from the $d$-dimensional Gaussian distribution with an estimated covariance matrix. In contrast, we adopt the multiplier bootstrap procedure. Let $e_i\sim \mathbb{N}(0,1),i=1,\cdots,n$ be independent of the data $\mathcal{D}=\{\boldsymbol{x}_i,y_i\}_{i=1}^{n}$, and consider $$\hat{\boldsymbol{Q}}_n^e=\sqrt{n}\hat{\mathbf{G}}\frac{1}{n}\sum_{i=1}^{n}e_i\left(b'(\hat{\boldsymbol{\beta}}_\mathrm{p}^\top\boldsymbol{x}_i)-y_i\right)(\boldsymbol{z}_i-\hat{\mathbf{w}}_\mathrm{p}\boldsymbol{u}_i).$$ Denote the $\alpha$-quantile of $\|\hat{\boldsymbol{Q}}_n^e\|_\infty$ by $c_{n}(\alpha)=\inf_{t\in\mathbb{R}}\{t\in\mathbb{R}:\mathbb{P}_e(\|\hat{\boldsymbol{Q}}_n^e\|_\infty\leq t)\geq\alpha\},$
where $\mathbb{P}_e(\mathcal{A})$ represents the probability of the event $\mathcal{A}$ with respect to $e_1,\cdots,e_n$. The validity of the bootstrap approach is established by the following theorem.

\begin{theorem}\label{thm5}
	Under the conditions of Proposition \ref{prop3}, if $\log(p)\log\log(p)\{s_1^2\log p/r_\mathrm{p}+(s_1\vee s_2)(\log p/r_\mathrm{p})^{1/2}\}=o(1)$, we have $$\sup_{t\in\mathbb{R}}\left|\mathbb{P}\left(\|\hat{\boldsymbol{Q}}_n^e\|_\infty\leq t|\mathcal{D}\right)-\mathbb{P}\left(\|\boldsymbol{Q}_n\|_\infty\leq t\right)\right|=o(1).$$
\end{theorem}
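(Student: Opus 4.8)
The plan is to exploit that, conditionally on the data $\mathcal{D}$, the bootstrap statistic $\hat{\boldsymbol{Q}}_n^e=\sqrt{n}\hat{\mathbf{G}}\frac{1}{n}\sum_{i=1}^n e_i\hat{\boldsymbol{v}}_i$ with $\hat{\boldsymbol{v}}_i=(b'(\hat{\boldsymbol{\beta}}_\mathrm{p}^\top\boldsymbol{x}_i)-y_i)(\boldsymbol{z}_i-\hat{\mathbf{w}}_\mathrm{p}\boldsymbol{u}_i)$ is \emph{exactly} a centered Gaussian vector, since it is linear in the i.i.d.\ $N(0,1)$ multipliers $e_i$. Its conditional covariance is $\hat{\boldsymbol{\Sigma}}^e=\hat{\mathbf{G}}\hat{\boldsymbol{M}}\hat{\mathbf{G}}^\top$ with $\hat{\boldsymbol{M}}=\frac{1}{n}\sum_{i=1}^n\hat{\boldsymbol{v}}_i\hat{\boldsymbol{v}}_i^\top$, whereas the target $\boldsymbol{Q}_n$ is centered Gaussian with covariance $\boldsymbol{\Sigma}=c(\sigma_0)\boldsymbol{\Phi}^{-1}$. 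Unlike Proposition \ref{prop3}, no high-dimensional central limit theorem is needed here, so the problem reduces purely to a Gaussian-to-Gaussian comparison. The natural bridge is the identity $\boldsymbol{\Sigma}=\boldsymbol{\Phi}^{-1}\mathbb{E}[\boldsymbol{v}\boldsymbol{v}^\top]\boldsymbol{\Phi}^{-1}$, where $\boldsymbol{v}=(b'(\boldsymbol{\beta}_0^\top\boldsymbol{x})-y)(\boldsymbol{z}-\mathbf{w}_0\boldsymbol{u})$ and $\mathbb{E}[\boldsymbol{v}\boldsymbol{v}^\top]=c(\sigma_0)\boldsymbol{\Phi}$ by the GLM variance identity $\mathrm{Var}(y\mid\boldsymbol{x})=c(\sigma_0)b''(\boldsymbol{\beta}_0^\top\boldsymbol{x})$; this casts $\hat{\boldsymbol{\Sigma}}^e$ as the sandwiched sample analogue of $\boldsymbol{\Sigma}$.

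First I would invoke the Gaussian comparison inequality for maxima from \cite{chernozhukov2023high}: for two centered Gaussian vectors in $\mathbb{R}^d$ whose covariance diagonals are bounded below (guaranteed by Assumptions (A.1), (A.5) and (B.1)), the Kolmogorov distance between the laws of their $\|\cdot\|_\infty$-norms is bounded, up to a logarithmic pre-factor of order $\log(p)\log\log(p)$ stemming from Gaussian anti-concentration, by $\Delta:=\|\hat{\boldsymbol{\Sigma}}^e-\boldsymbol{\Sigma}\|_\infty$. Working on the high-probability event on which $\Delta$ is small and discarding its complement, the conclusion $\sup_t|\cdots|=o_P(1)$ follows once $\log(p)\log\log(p)\,\Delta=o_P(1)$. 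Thus the theorem hinges entirely on a max-norm bound for the covariance error $\Delta$.

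To bound $\Delta$, I would telescope
\[
\hat{\boldsymbol{\Sigma}}^e-\boldsymbol{\Sigma}=(\hat{\mathbf{G}}-\boldsymbol{\Phi}^{-1})\hat{\boldsymbol{M}}\hat{\mathbf{G}}^\top+\boldsymbol{\Phi}^{-1}\bigl(\hat{\boldsymbol{M}}-\mathbb{E}[\boldsymbol{v}\boldsymbol{v}^\top]\bigr)\hat{\mathbf{G}}^\top+\boldsymbol{\Phi}^{-1}\mathbb{E}[\boldsymbol{v}\boldsymbol{v}^\top](\hat{\mathbf{G}}-\boldsymbol{\Phi}^{-1})^\top,
\]
and control each summand in $\|\cdot\|_\infty$ via the submultiplicativity $\|\mathbf{A}\mathbf{B}\|_\infty\le\|\mathbf{A}\|_{L_1}\|\mathbf{B}\|_\infty$. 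The factors $\|\hat{\mathbf{G}}-\boldsymbol{\Phi}^{-1}\|_{L_1}$ and $\|\boldsymbol{\Phi}^{-1}\|_{L_1}$ are already controlled from the CLIME analysis under Assumption (B.1) with $q=0$, while $\|\hat{\boldsymbol{M}}\|_\infty$ and $\|\mathbb{E}[\boldsymbol{v}\boldsymbol{v}^\top]\|_\infty$ are bounded by (A.3)-(A.4). The genuinely new ingredient is $\|\hat{\boldsymbol{M}}-\mathbb{E}[\boldsymbol{v}\boldsymbol{v}^\top]\|_\infty$, which I would split as $\|\hat{\boldsymbol{M}}-\boldsymbol{M}\|_\infty+\|\boldsymbol{M}-\mathbb{E}[\boldsymbol{v}\boldsymbol{v}^\top]\|_\infty$ with $\boldsymbol{M}=\frac{1}{n}\sum_i\boldsymbol{v}_i\boldsymbol{v}_i^\top$. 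The oracle piece is a pure concentration term of order $\sqrt{\log d/n}$ (up to logarithmic factors) from a maximal inequality over the $d^2$ entries of products of the bounded/sub-exponential coordinates of $\boldsymbol{v}_i$, and is dominated by the pilot terms since $r_\mathrm{p}\le n$. The plug-in piece is obtained by expanding $\hat{\boldsymbol{v}}_i\hat{\boldsymbol{v}}_i^\top-\boldsymbol{v}_i\boldsymbol{v}_i^\top$ into contributions driven by $b'(\hat{\boldsymbol{\beta}}_\mathrm{p}^\top\boldsymbol{x}_i)-b'(\boldsymbol{\beta}_0^\top\boldsymbol{x}_i)$ and by $(\hat{\mathbf{w}}_\mathrm{p}-\mathbf{w}_0)\boldsymbol{u}_i$, bounded through the pilot rates $\|\hat{\boldsymbol{\beta}}_\mathrm{p}-\boldsymbol{\beta}_0\|_1\lesssim s_1\sqrt{\log p/r_\mathrm{p}}$ and its analogue for $\hat{\mathbf{w}}_\mathrm{p}$, with (A.4) used to linearize $b'$ and $b''$. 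Tracking these yields $\|\hat{\boldsymbol{M}}-\mathbb{E}[\boldsymbol{v}\boldsymbol{v}^\top]\|_\infty=O_P\bigl(s_1^2\log p/r_\mathrm{p}+(s_1\vee s_2)(\log p/r_\mathrm{p})^{1/2}\bigr)$, hence $\Delta$ of the same order, so that the stated rate condition makes $\log(p)\log\log(p)\,\Delta=o_P(1)$.

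The main obstacle is the plug-in term $\|\hat{\boldsymbol{M}}-\boldsymbol{M}\|_\infty$. The square $(b'(\hat{\boldsymbol{\beta}}_\mathrm{p}^\top\boldsymbol{x}_i)-y_i)^2$ produces cross terms that couple the mean-zero sub-exponential residual $b'(\boldsymbol{\beta}_0^\top\boldsymbol{x}_i)-y_i$ with the estimation error $b'(\hat{\boldsymbol{\beta}}_\mathrm{p}^\top\boldsymbol{x}_i)-b'(\boldsymbol{\beta}_0^\top\boldsymbol{x}_i)$; controlling these \emph{uniformly} over all $d^2$ entries rather than entrywise requires combining the pilot $\ell_1$-rates with a maximal concentration bound for products of sub-exponential terms, and it is precisely here that the interplay among $s_1$, $s_2$, $r_\mathrm{p}$ and the $\log p$ factors must be balanced to reproduce the two terms in the claimed rate. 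A secondary subtlety is that $\hat{\mathbf{G}}$ and the pilot estimators are built from $\mathcal{K}_\mathrm{p}$ while the empirical average runs over all $n$ observations and the independent multipliers $e_i$: conditioning on $\mathcal{D}$ cleanly decouples the multiplier randomness from the estimation randomness, but one must verify that the covariance bound holds on an event of probability $1-o(1)$ with respect to the data before the deterministic Gaussian comparison can be applied.
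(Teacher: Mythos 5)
Your proposal is correct and follows essentially the same route as the paper's proof: exact conditional Gaussianity of $\hat{\boldsymbol{Q}}_n^e$, a sandwich/telescoping bound on $\|\hat{\mathbf{G}}\hat{\boldsymbol{V}}_{11}\hat{\mathbf{G}}-\boldsymbol{\Phi}^{-1}\boldsymbol{V}_{11}\boldsymbol{\Phi}^{-1}\|_\infty$ using the $L_1$-norm control of $\hat{\mathbf{G}}-\boldsymbol{\Phi}^{-1}$ together with the plug-in and oracle concentration pieces of $\|\hat{\boldsymbol{V}}_{11}-\boldsymbol{V}_{11}\|_\infty$ (including the cross-term handled via the mean-zero sub-exponential residual, giving the $s_1^2\log p/r_\mathrm{p}+(s_1\vee s_2)(\log p/r_\mathrm{p})^{1/2}$ rate), and finally the Gaussian comparison inequality with the $\log(p)\log\log(p)$ anti-concentration factor. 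The only differences are cosmetic (your three-term telescoping versus the paper's two-step triangle inequality for the same sandwich).
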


Following Proposition \ref{prop3} and Theorem \ref{thm5}, denote $\check{\boldsymbol{\theta}}=(\check{\theta}_1,\cdots,\check{\theta}_d)$, the simultaneous $\alpha$-level confidence interval for $\boldsymbol{\theta}=(\theta_1,\cdots,\theta_d)^\top$ can be constructed as follows:
$$\left(\check{\theta}_i-n^{-1/2}c_{n}(1-\alpha),\check{\theta}_i+n^{-1/2}c_{n}(1-\alpha)\right),i=1,\cdots,d.$$
Moreover, we propose a studentized statistic $\|\sqrt{n}\hat{g}_{j,j}^{-1/2}(\check{\boldsymbol{\theta}}-\boldsymbol{\theta})_j\|_\infty$ for $j=1,\cdots,d$ by leveraging the idea in \citet{zhang2017simultaneous,cai2025statistical}, which can also be considered to attain confidence intervals with varying length, where $\hat{g}_{j,j}$ is the $(j,j)$-th element of $\hat{\mathbf{G}}$. Denote $\boldsymbol{S}=\mbox{diag}(\boldsymbol{\Phi}^{-1})$ and $\hat{\boldsymbol{S}}=\mbox{diag}(\hat{\mathbf{G}})$. Let $$\hat{\boldsymbol{Q}}_n^{e,stu}=\sqrt{n}\hat{\boldsymbol{S}}^{-1/2}\hat{\mathbf{G}}\frac{1}{n}\sum_{i=1}^{n}e_i\left(b'(\hat{\boldsymbol{\beta}}_\mathrm{p}^\top\boldsymbol{x}_i)-y_i\right)(\boldsymbol{z}_i-\hat{\mathbf{w}}_\mathrm{p}\boldsymbol{u}_i),$$ and denote the $\alpha$-quantile of $\|\hat{\boldsymbol{Q}}_n^{e,stu}\|_\infty$ by $c_{n}^{stu}(\alpha)=\inf_{t\in\mathbb{R}}\{t\in\mathbb{R}:\mathbb{P}_e(\|\hat{\boldsymbol{Q}}_n^{e,stu}\|_\infty\leq t)\geq\alpha\}.$
The validity of the studentized bootstrap approach is established by the following theorem.

\begin{theorem}\label{thm6}
	Under the conditions of Proposition \ref{prop3}, there exists a $d$-dimensional Gaussian random vector $\boldsymbol{Q}_n^{stu}$ with mean $\mathbf{0}$ and variance $c(\sigma_0)\boldsymbol{S}^{-1/2}\boldsymbol{\Phi}^{-1}\boldsymbol{S}^{-1/2}$ such that $$\sup_{t\in\mathbb{R}}\left|\mathbb{P}\left(\|\sqrt{n}\hat{\boldsymbol{S}}^{-1/2}(\check{\boldsymbol{\theta}}-\boldsymbol{\theta})\|_\infty\leq t\right)-\mathbb{P}(\|\boldsymbol{Q}_n^{stu}\|_\infty\leq t)\right|=o(1).$$
	Furthermore, under the conditions of Theorem \ref{thm5},
	$$\sup_{t\in\mathbb{R}}\left|\mathbb{P}\left(\|\hat{\boldsymbol{Q}}_n^{e,stu}\|_\infty\leq t\right)-\mathbb{P}(\|\boldsymbol{Q}_n^{stu}\|_\infty\leq t)\right|=o(1).$$
\end{theorem}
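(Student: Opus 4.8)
The plan is to follow the template of Proposition~\ref{prop3} and Theorem~\ref{thm5}, augmenting them with a studentization step that handles the random diagonal scaling $\hat{\boldsymbol{S}}^{-1/2}$. For the first display, I would start from the uniform Bahadur representation of Theorem~\ref{thm4}, which under the hypotheses of Proposition~\ref{prop3} reduces to
$$\sqrt{n}(\check{\boldsymbol{\theta}}-\boldsymbol{\theta}_0)=-\frac{1}{\sqrt{n}}\boldsymbol{\Phi}^{-1}\sum_{i=1}^{n}S_i(\boldsymbol{\theta}_0,\boldsymbol{\gamma}_0,\mathbf{w}_0)+\mathbf{R}_n,\qquad \|\mathbf{R}_n\|_\infty=o_P(1).$$
Writing $\sqrt{n}\hat{\boldsymbol{S}}^{-1/2}(\check{\boldsymbol{\theta}}-\boldsymbol{\theta}_0)=\sqrt{n}\boldsymbol{S}^{-1/2}(\check{\boldsymbol{\theta}}-\boldsymbol{\theta}_0)+\sqrt{n}(\hat{\boldsymbol{S}}^{-1/2}-\boldsymbol{S}^{-1/2})(\check{\boldsymbol{\theta}}-\boldsymbol{\theta}_0)$, I would bound the second (studentization) summand coordinatewise: since both diagonal matrices have $j$-th entry $\hat{g}_{j,j}^{-1/2}$ and $g_{j,j}^{-1/2}$, with $g_{j,j}$ bounded away from zero by Assumptions~(A.5) and (B.1), a one-term expansion yields $|\hat{g}_{j,j}^{-1/2}-g_{j,j}^{-1/2}|\lesssim|\hat{g}_{j,j}-g_{j,j}|\le\|\hat{\mathbf{G}}-\boldsymbol{\Phi}^{-1}\|_\infty$. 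Combined with the crude bound $\|\sqrt{n}(\check{\boldsymbol{\theta}}-\boldsymbol{\theta}_0)\|_\infty=O_P(\sqrt{\log d})$ coming from the leading term, the studentization summand is $O_P(\|\hat{\mathbf{G}}-\boldsymbol{\Phi}^{-1}\|_\infty\sqrt{\log d})=o_P(1)$ under the imposed rate conditions, so it does not affect the limiting law.

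Having reduced to $\sqrt{n}\boldsymbol{S}^{-1/2}(\check{\boldsymbol{\theta}}-\boldsymbol{\theta}_0)=-n^{-1/2}\boldsymbol{S}^{-1/2}\boldsymbol{\Phi}^{-1}\sum_{i=1}^{n}S_i(\boldsymbol{\theta}_0,\boldsymbol{\gamma}_0,\mathbf{w}_0)+o_P(1)$, I would apply the high-dimensional Gaussian approximation of \cite{chernozhukov2023high} to this normalized i.i.d.\ sum. Its coordinates are sub-exponential by Assumption~(A.3), and using $\mathrm{Var}(y\mid\boldsymbol{x})=c(\sigma_0)b''(\boldsymbol{\beta}_0^\top\boldsymbol{x})$ one checks $\mathrm{Var}(S_i)=c(\sigma_0)\boldsymbol{\Phi}$, so that the covariance of the sum equals $c(\sigma_0)\boldsymbol{S}^{-1/2}\boldsymbol{\Phi}^{-1}\boldsymbol{S}^{-1/2}$. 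The moment condition $\log^7(dn)/n=o(1)$ then validates the approximation and produces the Gaussian vector $\boldsymbol{Q}_n^{stu}$ with the stated covariance, giving the first display.

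For the bootstrap statement I would condition on $\mathcal{D}$, under which $\hat{\boldsymbol{Q}}_n^{e,stu}=\hat{\boldsymbol{S}}^{-1/2}\hat{\boldsymbol{Q}}_n^{e}$ is exactly Gaussian with conditional covariance $\hat{\boldsymbol{\Sigma}}^{stu}=\hat{\boldsymbol{S}}^{-1/2}\hat{\mathbf{G}}\hat{\boldsymbol{\Sigma}}\hat{\mathbf{G}}^\top\hat{\boldsymbol{S}}^{-1/2}$, where $\hat{\boldsymbol{\Sigma}}=n^{-1}\sum_{i=1}^{n}(b'(\hat{\boldsymbol{\beta}}_\mathrm{p}^\top\boldsymbol{x}_i)-y_i)^2(\boldsymbol{z}_i-\hat{\mathbf{w}}_\mathrm{p}\boldsymbol{u}_i)(\boldsymbol{z}_i-\hat{\mathbf{w}}_\mathrm{p}\boldsymbol{u}_i)^\top$. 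It then suffices to show $\|\hat{\boldsymbol{\Sigma}}^{stu}-c(\sigma_0)\boldsymbol{S}^{-1/2}\boldsymbol{\Phi}^{-1}\boldsymbol{S}^{-1/2}\|_\infty=o_P(1)$ and invoke the Gaussian comparison theorem of \cite{chernozhukov2023high}, exactly as in the proof of Theorem~\ref{thm5}, to conclude that the conditional law of $\|\hat{\boldsymbol{Q}}_n^{e,stu}\|_\infty$ tracks $\|\boldsymbol{Q}_n^{stu}\|_\infty$. The covariance bound is obtained by propagating the already-established errors $\|\hat{\mathbf{G}}-\boldsymbol{\Phi}^{-1}\|_\infty$, $\|\hat{\boldsymbol{\Sigma}}-c(\sigma_0)\boldsymbol{\Phi}\|_\infty$ (controlled as in Theorem~\ref{thm5}), and $\|\hat{\boldsymbol{S}}^{-1/2}-\boldsymbol{S}^{-1/2}\|_\infty$ through the triple matrix product, using that all factors have entries bounded away from zero and infinity.

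The main obstacle will be the studentization step in both parts. In the first display it amounts to verifying that the extra factor $\hat{\boldsymbol{S}}^{-1/2}-\boldsymbol{S}^{-1/2}$, controlled uniformly over all $d$ coordinates, is small enough to survive multiplication against a statistic that grows like $\sqrt{\log d}$; in the bootstrap part the same difficulty resurfaces as the need to bound the max-norm deviation of a triple product in which each of the three factors carries its own estimation error, so one must confirm that the cross terms remain negligible under the rate conditions of Theorem~\ref{thm5} rather than being dominated by the slowest factor.
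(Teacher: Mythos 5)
Your proposal is correct and follows essentially the same route as the paper: bound $\|\hat{\boldsymbol{S}}^{-1/2}-\boldsymbol{S}^{-1/2}\|$ via the already-established error of $\hat{\mathbf{G}}$, absorb the studentization discrepancy into the Bahadur remainder, apply the high-dimensional Gaussian approximation to the linear term with covariance $c(\sigma_0)\boldsymbol{S}^{-1/2}\boldsymbol{\Phi}^{-1}\boldsymbol{S}^{-1/2}$, and for the bootstrap compare the conditional covariance $\hat{\boldsymbol{S}}^{-1/2}\hat{\mathbf{G}}\hat{\boldsymbol{V}}_{11}\hat{\mathbf{G}}\hat{\boldsymbol{S}}^{-1/2}$ to its population counterpart and invoke the Gaussian comparison lemma as in Theorem \ref{thm5}. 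The only cosmetic difference is that the paper splits the error as $(\hat{\boldsymbol{S}}^{-1/2}-\boldsymbol{S}^{-1/2})\boldsymbol{\Phi}^{-1}S(\boldsymbol{\theta}_0,\boldsymbol{\gamma}_0,\mathbf{w}_0)$ plus $\hat{\boldsymbol{S}}^{-1/2}$ times the Bahadur remainder, whereas you split off $(\hat{\boldsymbol{S}}^{-1/2}-\boldsymbol{S}^{-1/2})(\check{\boldsymbol{\theta}}-\boldsymbol{\theta}_0)$ directly; these are equivalent.
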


Following Theorem \ref{thm6}, the simultaneous $\alpha$-level confidence interval with varying length for $\boldsymbol{\theta}=(\theta_1,\cdots,\theta_d)^\top$ can be constructed as $$(\check{\theta}_i-n^{-1/2}\hat{g}_{i,i}^{1/2}c_{n}^{stu}(1-\alpha),\check{\theta}_i+n^{-1/2}\hat{g}_{i,i}^{1/2}c_{n}^{stu}(1-\alpha)), \quad i=1,\cdots,d.$$ 
\begin{algorithm}[htbp]
	\SetAlgoLined
	{\bf Subsampling and initial estimation:} Run Algorithm \ref{alg1} with expected subsample size $r_\mathrm{p}$ to get a initial subsample set $\mathcal{K}_\mathrm{p}$, and obtain $\hat{\boldsymbol{\beta}}_{\mathrm{p}}$ and $\hat{\mathbf{w}}_\mathrm{p}$ via \eqref{eq1} and \eqref{eq2}. Compute $\check{\boldsymbol{\Phi}}_\mathrm{p}=\frac{1}{r_\mathrm{p}}\sum_{i\in\mathcal{K}_\mathrm{p}}b''(\hat{\boldsymbol{\beta}}_{\mathrm{p}}^\top \boldsymbol{x}_{i})\left(\boldsymbol{z}_i-\hat{\mathbf{w}}_\mathrm{p}\boldsymbol{u}_i\right)\left(\boldsymbol{z}_i-\hat{\mathbf{w}}_\mathrm{p}\boldsymbol{u}_i\right)^\top$. Solve \eqref{eq16} to attain $\hat{\mathbf{G}}$, let $\hat{\boldsymbol{S}}=\mbox{diag}(\hat{\mathbf{G}})=\mbox{diag}(\hat{g}_{1,1},\cdots,\hat{g}_{d,d})$, and attain $\check{\boldsymbol{\theta}}=(\check{\theta}_1,\cdots,\check{\theta}_d)^\top$ via \eqref{eq17}.
	
	\For{$k=1,\cdots,B$}{
		Generate standard normal random variables $e_{ki},i=1,\cdots,n$.
		
		Calculate $\hat{\boldsymbol{Q}}_{n,k}^{e,stu}=\sqrt{n}\hat{\boldsymbol{S}}^{-1/2}\hat{\mathbf{G}}\frac{1}{n}\sum_{i=1}^{n}e_{ki}\left(b'(\hat{\boldsymbol{\beta}}_\mathrm{p}^\top\boldsymbol{x}_i)-y_i\right)(\boldsymbol{z}_i-\hat{\mathbf{w}}_\mathrm{p}\boldsymbol{u}_i)$
	}
	
	{\bf Confidence interval construction:} Find the $\alpha$-quantile of $\{\|\hat{\boldsymbol{Q}}_{n,k}^{e,stu}\|_\infty\}_{k=1}^{B}$, represented as $c_{n}^{stu}(1-\alpha)$. The simultaneous $\alpha$-level confidence interval for $\theta_i$ is $\left(\check{\theta}_i-n^{-1/2}\hat{g}_{i,i}^{1/2}c_{n}^{stu}(1-\alpha),\check{\theta}_i+n^{-1/2}\hat{g}_{i,i}^{1/2}c_{n}^{stu}(1-\alpha)\right),i=1,\cdots,d.$
	\caption{Simultaneous confidence intervals through Bootstrap}\label{alg4}
\end{algorithm}
The whole steps to attain $\check{\boldsymbol{\theta}}$ and construct simultaneous confidence intervals with varying length are exhibited in Algorithm \ref{alg4}. The corresponding procedure for the non-studentized version is structurally similar and thus omitted for brevity. Note that conditions in Proposition \ref{prop3} implies that $r_\mathrm{p}$ should be selected such that $r_{\mathrm{p}}\gg \max\{n^{1/2}s_1(s_1\vee s_2^{1/2})\log p,s_2^2\log^2 p\}$. In implementation, $r_{\mathrm{p}}=n/C$ for some constant $C$ (e.g., $C=5$) suffices similar to the multi-step estimator in Section \ref{sec3.2}.
\section{Numerical experiments}\label{sec5}
In this section, we assess the finite-sample performance of our three estimators under both linear and logistic models. We repeated the simulations $500$ times, all executed in \texttt{R} on a system equipped with an Intel Core i9-12900KF CPU and 64 GB of DDR5 RAM. Due to space limits, the logistic regression results are placed in the supplementary material.

We conduct numerical experiments as follows:\\
(i) For inference on individual coefficients, we compare our DVS estimator $\tilde{\boldsymbol{\theta}}$ and the multi-step estimator $\hat{\boldsymbol{\theta}}_{(end)}$ with several subsampled decorrelated-score estimators from \citet{shan2024optimal}, including the A- and L-optimal weighted and unweighted versions $\check{\boldsymbol{\theta}}_{dw}^{A}$, $\check{\boldsymbol{\theta}}_{dw}^{L}$, $\check{\boldsymbol{\theta}}_{duw}^{A}$, $\check{\boldsymbol{\theta}}_{duw}^{L}$, as well as the uniform subsampling version of subsampled decorrelated score estimator $\check{\boldsymbol{\theta}}_{duni}$. For reference, we also report the computation time of the full-data estimator of \citet{ning2017general}.\\
(ii) For simultaneous inference, we compare our debiased estimator $\check{\boldsymbol{\theta}}$ with nodewise-regression-based debiased estimators from \citet{van2014asymptotically}, restricting nodewise regression to the parameters of interest for computational fairness. Since their method is not directly applicable to high-dimensional simultaneous inference, we additionally adopt \citet{zhang2017simultaneous} for linear models that also uses a multiplier bootstrap. For logistic regression, we treat \citet{van2014asymptotically} as a baseline, again combined with a multiplier bootstrap. Across all settings, we evaluate both non-studentized and studentized procedures, denoted by $\check{\boldsymbol{\theta}}_{ns}$ and $\check{\boldsymbol{\theta}}_{s}$ for our method, and $\hat{\boldsymbol{\theta}}_{ns}$ and $\hat{\boldsymbol{\theta}}_{s}$ for the baseline.

We generate datasets of size 
$n$ from the linear model $y_i=\boldsymbol{\beta}_0^\top \boldsymbol{x}_i+\varepsilon_i$, where
$\boldsymbol{\beta}_0=(\sqrt{3},\sqrt{3},\sqrt{3})^\top\in\mathbb{R}^{p}$, $\boldsymbol{x}_i$'s are i.i.d. covariate vectors, and $\varepsilon_i$ are i.i.d. random errors. The covariates $\boldsymbol{x}_i$ and errors $\varepsilon_i$ are generated from the following four cases: (a) $\boldsymbol{x}\sim N(\mathbf{0},\boldsymbol{\Sigma})$, $\varepsilon\sim N(0,1)$; (b) $\boldsymbol{x}\sim N(\mathbf{0},\boldsymbol{\Sigma})$, $\varepsilon\sim 2t_{5}$; (c) $\boldsymbol{x}\sim t_{10}(\mathbf{0},\boldsymbol{\Sigma})$, $\varepsilon\sim N(0,1)$; (d) $\boldsymbol{x}\sim t_{10}(\mathbf{0},\boldsymbol{\Sigma})$, $\varepsilon\sim 2t_{5}$, where $\boldsymbol{\Sigma}\in\mathbb{R}^{p\times p}$ is a Toeplitz matrix with its $(j,k)$-th component being $0.5^{|j-k|}$.

Since for the linear model, $c(\sigma_0)$ corresponds to the variance of the random error $\varepsilon$. For all the methods, we apply $\sum_{i=1}^{n}\|y_i-\boldsymbol{x}_i^\top\hat{\boldsymbol{\beta}}_\mathrm{p}\|_2^2/(n-\|\hat{\boldsymbol{\beta}}_\mathrm{p}\|_0)$
as an estimator of $c(\sigma_0)$, where $\hat{\boldsymbol{\beta}}_\mathrm{p}$ is a pilot Lasso estimator defined in \eqref{eq1}. This estimator, similar to that in \citet{reid2016study}, helps avoid noise underestimation when $n$ is small.
\subsection{Inference for single coefficient}\label{sec5.1}
We first let the whole data size $n=10^5$ and the pilot subsample size $r_\mathrm{p}=1000$. For the DVS estimator and the subsampling estimators in \citet{shan2024optimal}, we set subsample size $r=r_\mathrm{p}$. The dimension is set to $p=500$, and the parameters of interest are the first $d$ variables with $d=5$ or $d=10$. The confidence level is fixed as $\alpha=0.05$. The comparisons of the MSE, running time, the average coverage probability (ACP) and the average length (AL) of the $95\%$ confidence intervals are delineated in Table \ref{table1}.

\begin{table*}[htbp]
	\caption{MSE, Time (seconds), ACP, and AL for Linear model with $n=10^5$ and $p=500$.}
	\resizebox{\linewidth}{!}{
		\centering
		\begin{tabular}{ccccccccccccccccc}
			\hline
			&$d$& &$\hat{\boldsymbol{\theta}}_{(end)}$&$\tilde{\boldsymbol{\theta}}$&$\check{\boldsymbol{\theta}}_{duw}^{A}$&$\check{\boldsymbol{\theta}}_{duw}^{L}$&$\check{\boldsymbol{\theta}}_{dw}^{A}$&$\check{\boldsymbol{\theta}}_{dw}^{L}$&$\check{\boldsymbol{\theta}}_{duni}$&\cr
			\hline
			\multirow{8}{*}{(a)}&\multirow{4}{*}{5}&MSE&$8.469\times 10^{-5}$&$1.400\times 10^{-4}$&$6.780\times10^{-3}$&$6.781\times10^{-3}$&$7.817\times10^{-3}$&$7.806\times10^{-3}$&$7.812\times10^{-3}$\cr
			& &Time&1.667&1.646&1.522&1.544&1.526&1.562&1.386\cr
			& &ACP&0.948&0.946&0.958&0.945&0.938&0.956&0.952\cr
			& &AL&0.016&0.021&0.140&0.146&0.147&0.154&0.156\cr
			&\multirow{4}{*}{10}&MSE&$1.743\times 10^{-4}$&$3.525\times 10^{-4}$&0.014&0.017&0.015&0.015&0.015\cr
			& &Time&2.667&2.810&2.868&2.567&2.340&2.571&2.033\cr
			& &ACP&0.931&0.952&0.956&0.938&0.950&0.952&0.951\cr
			& &AL&0.016&0.024&0.150&0.154&0.154&0.158&0.158\cr
			\multirow{8}{*}{(b)}&\multirow{4}{*}{5}&MSE&$1.523\times 10^{-4}$&$2.380\times 10^{-4}$&0.011&0.012&0.013&0.013&0.012\cr
			& &Time&1.849&1.776&1.632&1.536&1.623&1.638&1.431\cr
			& &ACP&0.940&0.950&0.944&0.952&0.939&0.952&0.960\cr
			& &AL&0.020&0.027&0.180&0.190&0.190&0.199&0.202\cr
			&\multirow{4}{*}{10}&MSE&$3.280\times 10^{-4}$&$6.871\times 10^{-4}$&0.023&0.024&0.027&0.027&0.027\cr
			& &Time&2.743&3.089&2.798&2.773&2.812&2.714&2.387\cr
			& &ACP&0.934&0.937&0.958&0.954&0.940&0.946&0.944\cr
			& &AL&0.021&0.031&0.192&0.198&0.199&0.206&0.206\cr
			\multirow{8}{*}{(c)}&\multirow{4}{*}{5}&MSE&$7.942\times 10^{-5}$&$1.317\times 10^{-4}$&$4.722\times 10^{-3}$&$4.719\times 10^{-3}$&$5.094\times 10^{-3}$&$6.035\times 10^{-3}$&$6.166\times 10^{-3}$\cr
			& &Time&1.719&1.698&1.534&1.498&1.516&1.528&1.334\cr
			& &ACP&0.929&0.945&0.941&0.952&0.952&0.951&0.960 \cr
			& &AL&0.014&0.020&0.117&0.123&0.127&0.133&0.140\cr
			&\multirow{4}{*}{10}&MSE&$1.539\times 10^{-4}$&$3.473\times 10^{-4}$&0.011&0.011&0.011&0.012&0.012\cr
			& &Time&2.803&3.174&2.481&2.624&2.537&2.594&2.315\cr
			& &ACP&0.930&0.957&0.940&0.944&0.955&0.946&0.954\cr
			& &AL&0.014&0.024&0.126&0.129&0.134&0.138&0.142\cr
			\multirow{8}{*}{(d)}&\multirow{4}{*}{5}&MSE&$1.082\times 10^{-4}$&$2.140\times 10^{-4}$&$7.871\times 10^{-3}$&$8.936\times 10^{-3}$&$9.345\times 10^{-3}$&$9.821\times 10^{-3}$&0.010\cr
			& &Time&1.742&1.712&1.526&1.542&1.552&1.575&1.374\cr
			& &ACP&0.948&0.942&0.937&0.948&0.951&0.945&0.942\cr
			& &AL&0.018&0.026&0.151&0.159&0.164&0.172&0.179\cr
			&\multirow{4}{*}{10}&MSE&$2.442\times 10^{-4}$&$5.810\times 10^{-4}$&0.016&0.019&0.020&0.020&0.021\cr
			& &Time&2.932&3.210&2.952&2.686&2.806&2.576&2.635\cr
			& &ACP&0.936&0.954&0.954& 0.946&0.950&0.956&0.954\cr
			& &AL&0.021&0.031&0.192&0.198&0.199&0.206&0.206\cr
			\hline
		\end{tabular}
	}
	\label{table1}
\end{table*}
Take case (a) and $d=5$ as an example, the full-data-based decorrelated score method takes about $150$ seconds. Thus our methods serve as a computationally economical solution for inferring the target low-dimensional parameters if accuracy is desired. Compared with the subsampling estimators in \citet{shan2024optimal}, it can be seen from Table \ref{table1} that the empirical coverage probabilities of all methods are near $95\%$ and they share similar computational costs. Among these methods, $\hat{\boldsymbol{\theta}}_{(end)}$ achieves the smallest MSEs and the narrowest average lengths, and the DVS estimator is slightly worse than the multi-step estimator, but much better than other baseline methods. Overall, the average length of our method is several times or even nearly ten times shorter than other baseline methods.

Given the special properties of our DVS estimator $\tilde{\boldsymbol{\theta}}$ and multi-step estimator $\hat{\boldsymbol{\theta}}_{(end)}$, we provide a more intuitive comparison. Fixing $n=10^5$, $p=500$, $d=5$, $r_\mathrm{p}=2000$, we examine how the ACPs and ALs vary with the subsample size $r$ for the DVS estimator and the subsampling methods under case (a). For reference, we also include the multi-step estimator with $r_\mathrm{p}=2000$ and the full-data-based method. The results are depicted in Figure \ref{fig1}, while similar trends in other cases are omitted for brevity. 

\begin{figure}[htbp]
	\begin{subfigure}{0.48\textwidth}
		\centering
		\includegraphics[width=\textwidth]{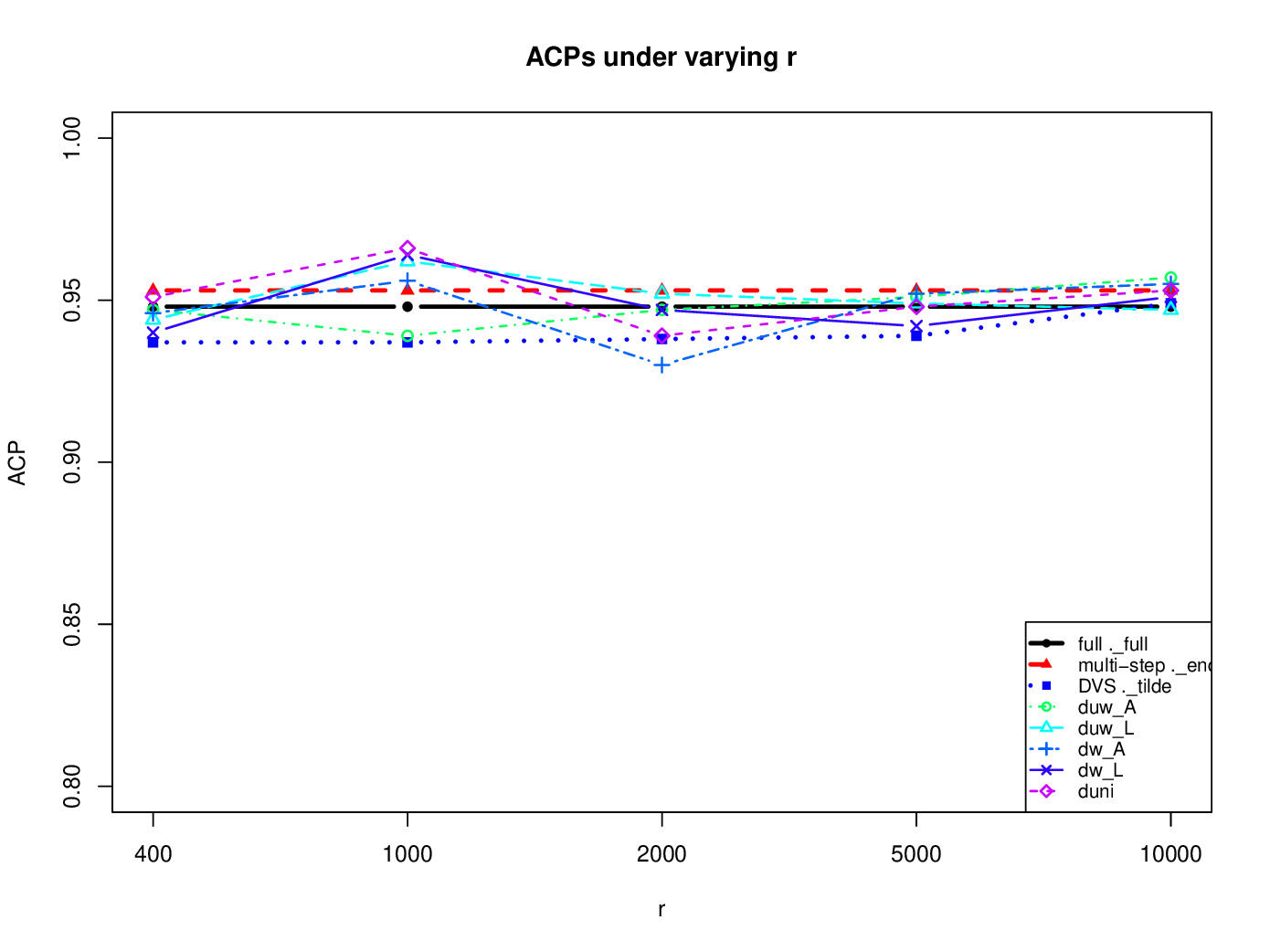}
	\end{subfigure}
	\hfill
	\begin{subfigure}{0.48\textwidth}
		\centering
		\includegraphics[width=\textwidth]{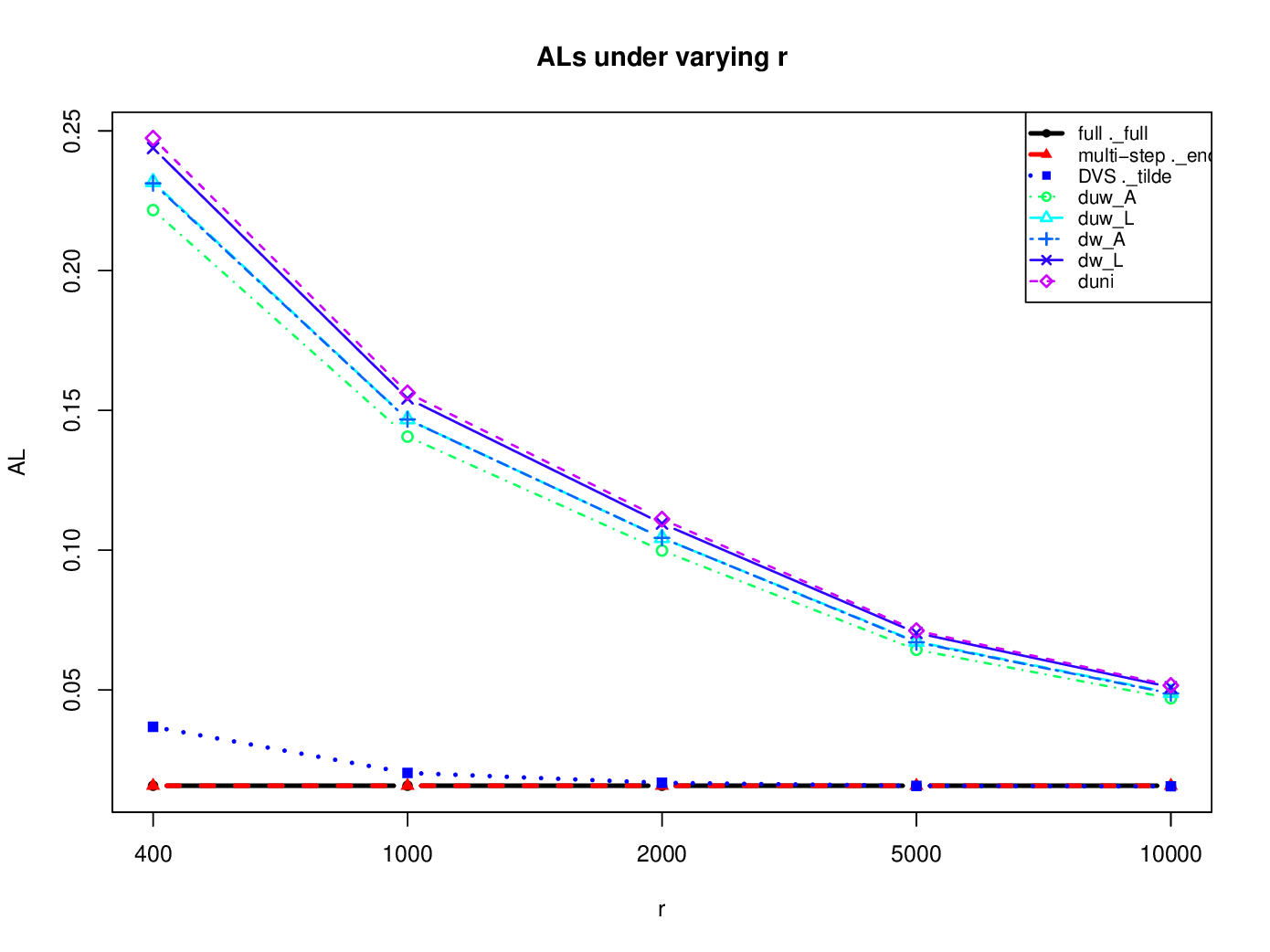}
	\end{subfigure}
	\caption{Comparison of ACPs and ALs under varying subsample sizes ($r$) for the linear model (Case a) with $n=10^5$, $p=500$, $d=5$, and $r_\mathrm{p}=2 \times 10^3$.}
	\label{fig1}
\end{figure}
Figure \ref{fig1} demonstrates that both the multi-step and DVS estimators consistently outperform the competing methods. The multi-step estimator attains average interval lengths nearly identical to the full-data method, while the DVS estimator exhibits a rapid decrease in interval length as the subsample size $r$ increases, achieving comparable performance when $r=2000$. 
These results further confirm the remarkable efficiency of our proposed approach. The experimental results for logistic regression are consistent with those for linear regression; further details are provided in the supplementary materials.

\subsection{Simultaneous inference}\label{sec5.2}
Let the sample size $n$, dimension $p$, and the pilot subsample size be $5\times 10^5$, $500$ and $5000$ in our method, respectively. The parameters of interest are the first $d$ variables with $d=50$. The confidence level is fixed as $\alpha=0.05$. The comparisons of the MSE, running time, ACP and AL of the $95\%$ simultaneous confidence intervals for four cases are shown in Table \ref{table2}.

As shown in Table \ref{table2}, the estimators proposed in Section \ref{sec4} provide a computationally efficient approach for simultaneous inference of the high-dimensional target parameters. Specifically, the empirical coverage probabilities of all methods remain close to $95\%$, and their average interval lengths and MSEs are similar, whereas our approach requires only about $10\%$ of the computational time of the baseline method, since merely $10\%$ of the data are utilized to obtain the initial estimator. Therefore, our estimators serve as a an effective and practical alternative to existing methods, as they offer comparable performance with substantially lower computational cost. The results for logistic regression also closely mirror those for linear regression in the supplementary materials.
\begin{table*}[htbp]
	\caption{MSE, Time (seconds), ACP, and AL for simultaneous inference in Linear model with $n=5\times10^4$, $p=500$, $r_\mathrm{p}=5000$, and $d=50$.}
	\resizebox{\linewidth}{!}{
		\centering
		\begin{tabular}{cccccccccc}
			\hline
			&$\check{\boldsymbol{\theta}}_{ns}$&$\check{\boldsymbol{\theta}}_{s}$&$\hat{\boldsymbol{\theta}}_{ns}$&$\hat{\boldsymbol{\theta}}_{s}$& &$\check{\boldsymbol{\theta}}_{ns}$&$\check{\boldsymbol{\theta}}_{s}$&$\hat{\boldsymbol{\theta}}_{ns}$&$\hat{\boldsymbol{\theta}}_{s}$\cr
			\hline
			&\multicolumn{4}{c}{(a)}& &\multicolumn{4}{c}{(b)}\cr
			MSE&$1.690\times10^{-3}$&$1.689\times10^{-3}$&$1.649\times10^{-3}$&$1.649\times10^{-3}$& &0.011&0.011&0.011&0.011\cr
			Time&128.89&130.69&1255.97&1261.22& &134.94&138.41&1431.81&1439.47\cr
			ACP&0.940&0.945&0.945&0.950& &0.932&0.936&0.932&0.932\cr
			AL&0.037&0.037&0.037&0.037& &0.097&0.097&0.096&0.096\cr
			&\multicolumn{4}{c}{(c)}& &\multicolumn{4}{c}{(d)}\cr
			MSE&$1.514\times10^{-3}$&$1.487\times10^{-3}$&$1.311\times10^{-3}$&$1.311\times10^{-3}$& &0.010&0.010&0.009&0.009\cr
			Time&131.73&142.49&1444.75&1452.32& &138.97&136.71&1438.51&1443.10\cr
			ACP&0.936&0.924&0.940&0.932& &0.960&0.956&0.952&0.952\cr
			AL&0.036&0.035&0.033&0.033& &0.092&0.092&0.086&0.086\cr
			\hline
		\end{tabular}
	}
	\label{table2}
\end{table*}

\section{Real data analysis}\label{sec6}
In this Section, we apply our proposed methods to
a superconductivity dataset \citep{hamidieh2018data} that is available from \href{https://archive.ics.uci.edu/ml/datasets/Superconductivty+Data#}{https://archive.ics.uci.edu/ml/datasets/Superconductivty+Data$\#$}. The dataset contains $21263$ observations and $81$ features, with the critical temperature of superconductors serving as the response variable.
\begin{table*}[!htbp]
	\caption{Bias, standard error (SD), Time (seconds), and confidence interval (CI) for $z_1$ and $z_2$ in the superconductivity dataset.}
	\resizebox{\linewidth}{!}{
		\centering
		\begin{tabular}{ccccccccccccccccc}
			\hline
			$r_{\mathrm{p}}$& &$\hat{\boldsymbol{\theta}}_{(end)}$&$\tilde{\boldsymbol{\theta}}$&$\check{\boldsymbol{\theta}}_{duw}^{A}$&$\check{\boldsymbol{\theta}}_{duw}^{L}$&$\check{\boldsymbol{\theta}}_{dw}^{A}$&$\check{\boldsymbol{\theta}}_{dw}^{L}$&$\check{\boldsymbol{\theta}}_{duni}$&\cr
			\hline
			\multirow{7}{*}{500}&Time&--&0.356&0.340&0.335&0.366&0.367&0.352\cr
			&Bias1&--&-0.076&-0.126&-0.130&-0.154&-0.085&-0.125\cr
			&SD1&--&\textbf{0.039}&0.050&0.066&0.072&0.062&0.061\cr
			&CI1&--&[-0.133,0.414]&[-2.040, 1.312]&[0.784,1.537]&[0.140,0.700]&[0.214,0.574]&[0.381,0.668]\cr
			&Bias2&--&-0.021&-0.020&0.052&-0.019&0.091&-0.022\cr
			&SD2&--&\textbf{0.030}&0.058&0.041&0.051&0.068&0.059\cr
			&CI2&--&[0.200,0.798]&[-1.776,3.767]&[-8.820,5.144]&[-0.023,0.524]&[-0.361,0.434]&[0.094,0.584]\cr
			\cr
			\multirow{7}{*}{1000}&Time&--&0.378&0.376&0.368&0.372&0.357&0.364\cr
			&Bias1&--&-0.006&-0.116&-0.023&-0.076&0.068&-0.030\cr
			&SD1&--&\textbf{0.037}&0.046&0.052&0.059&0.052&0.059\cr
			&CI1&--&[0.117,0.838]&[-0.568,0.448]&[-0.352,1.757]&[-0.433,0.821]&[0.363,1.098]&[0.223,1.296]\cr
			&Bias2&--&0.025&-0.019&0.048&0.019&-0.060&0.021\cr
			&SD2&--&\textbf{0.037}&0.061&0.053&0.040&0.052&0.042\cr
			&CI2&--&[0.250,0.790]&[-0.282,1.113]&[0.008,0.809]&[-0.310,0.256]&[-0.249,0.451]&[-0.146,0.226]\cr
			\cr
			\multirow{7}{*}{2000}&Time&--&0.410&0.408&0.424&0.424&0.436&0.418\cr
			&Bias1&--&-0.073&-0.070&-0.072&-0.036&-0.016&-0.056\cr
			&SD1&--&0.036&0.039&0.043&\textbf{0.031}&0.044&0.061\cr
			&CI1&--&[0.181,0.448]&[0.494,0.666]&[0.436,0.632]&[0.166,0.547]&[-0.156,0.748]&[-1.601,6.612]\cr
			&Bias2&--&0.017&0.082&0.068&-0.076&0.057&0.013\cr
			&SD2&--&0.023&0.033&0.037&\textbf{0.019}&0.038&0.045\cr
			&CI2&--&[0.129,0.355]&[0.060,0.315]&[0.104,0.335]&[0.059,0.406]&[0.057,0.348]&[-0.652,0.650]\cr
			\cr
			\multirow{7}{*}{5000}&Time&0.586&0.557&0.796&0.869&0.876&0.892&0.905\cr
			&Bias1&-0.003&-0.014&-0.074&-0.043&0.020&-0.002&0.022\cr
			&SD1&\textbf{0.008}&0.017&0.013&0.013&0.018&0.014&0.042\cr
			&CI1&[0.407,0.526]&[0.414,0.568]&[0.392,0.722]&[0.095,0.452]&[0.540,0.732]&[0.363,1.098]&[0.172,1.011]\cr
			&Bias2&-0.003&$4.773\times10^{-4}$&0.019&-0.008&0.030&0.053&0.007\cr
			&SD2&0.009&0.017&\textbf{0.008}&0.012&0.012&0.013&0.033\cr
			&CI2&[0.195,0.301]&[0.124,0.247]&[0.124,0.368]&[0.156,0.521]&[0.033,0.422]&[-34.62,33.00]&[-0.129,1.113]\cr
			\hline
		\end{tabular}
	}
	\label{table5}
\end{table*}

First, we focus on the weighted mean of thermal conductivity ($z_1$) and the range of atomic radius ($z_2$) since they are important features (see Table 5 of \citet{hamidieh2018data}). As a benchmark, we employing the decorrelated score approach \citep{ning2017general}, the estimates of $z_1$ and $z_2$ are $0.509$ and $0.241$, respectively, with a computational time of $1.514$ seconds. The corresponding $95\%$ confidence intervals are $[0.455,0.564]$ and $[0.194,0.288]$, respectively, suggesting that superconductors with higher weighted mean of thermal conductivity and range of atomic radius tend to exhibit higher critical temperatures. Now we compare the estimators in Section \ref{sec5.1} as follows. The multi-step estimator $\hat{\boldsymbol{\theta}}_{(end)}$ is applied with $r_\mathrm{p}=5000$, while $r=r_\mathrm{p}$ is chosen as $\{500,1000,2000,5000\}$ to leverage the DVS estimator $\tilde{\boldsymbol{\theta}}$, and other baseline works $\check{\boldsymbol{\theta}}_{dw}^{A}$, $\check{\boldsymbol{\theta}}_{dw}^{L}$, $\check{\boldsymbol{\theta}}_{duw}^{A}$, $\check{\boldsymbol{\theta}}_{duw}^{L}$, and $\check{\boldsymbol{\theta}}_{duni}$. Bias relative to the full-sample decorrelated estimates, standard deviation, and computational time (over $200$ replications) are reported in Table \ref{table5}, with $95\%$ confidence intervals from one replication.
\begin{table*}[!htbp]
	\caption{$95\%$ simultaneous confidence intervals for $20$ important features in the superconductivity dataset.}
	\resizebox{\linewidth}{!}{
		\centering
		\begin{tabular}{cccccccc}
			\hline
			Methods& & & & & & & \cr
			\hline
			\addlinespace[2pt]&1&2&3&4&5&6&7\\
			\multirow{6}{*}{$\check{\boldsymbol{\theta}}_{ns}$}&[-0.112, -0.109]
			&[0.074, 0.077]
			&[0.236, 0.239]
			&[-0.342, -0.339]
			&[0.262, 0.264]
			&[-68.769, -68.766]
			&[-0.574, -0.571]\cr
			\addlinespace[2pt]
			&8&9&10&11&12&13&14\\
			&[11.444, 11.446]
			&[-0.001, 0.001]
			&[-0.608, -0.605]
			&[-0.125, -0.122]
			&[0.547, 0.550]
			&[-4.742, -4.740]
			&[-0.352, -0.349]\cr
			\addlinespace[2pt]
			&15&16&17&18&19&20 \\
			&[0.003, 0.006]
			&[7.681, 7.684]
			&[-0.233, -0.230]
			&[-0.445, -0.442]
			&[-0.071, -0.068]
			&[-0.001, 0.002]\cr
			\addlinespace[2pt]&1&2&3&4&5&6&7\\
			\multirow{6}{*}{$\check{\boldsymbol{\theta}}_{s}$} 
			&[-0.089, -0.081]
			&[0.108, 0.116]
			&[0.142, 0.150]
			&[-0.177, -0.169]
			&[0.162, 0.170]
			&[-60.209, -60.201]
			&[-0.525, -0.518]\cr
			\addlinespace[2pt]
			&8&9&10&11&12&13&14\\
			&[20.530, 20.538]
			&[-0.004, 0.004]
			&[-0.611, -0.603]
			&[-0.125, -0.117]
			&[0.410, 0.418]
			&[-1.906, -1.898]
			&[-0.282, -0.274]\cr
			\addlinespace[2pt]
			&15&16&17&18&19&20\\
			&[0.002, 0.005]
			&[3.111, 3.118]
			&[-0.232, -0.225]
			&[-0.075, -0.067]
			&[-0.196, -0.188]
			&[-0.001, 0.001]\cr
			\addlinespace[2pt]&1&2&3&4&5&6&7\\
			\multirow{6}{*}{$\hat{\boldsymbol{\theta}}_{ns,\text{zhang}}$}
			&[-5.062, 4.421]
			&[-4.198, 5.285]
			&[-4.964, 4.519]
			&[-4.453, 5.030]
			&[-3.697, 5.786]
			&[19.132, 28.615]
			&[-4.380, 5.103]\cr
			\addlinespace[2pt]
			&8&9&10&11&12&13&14\\
			&[-68.544, -59.061]
			&[-58.346, -48.864]
			&[-4.277, 5.206]
			&[-5.251, 4.232]
			&[-4.830, 4.653]
			&[56.744, 66.227]
			&[-6.755, 2.728]\cr
			\addlinespace[2pt]
			&15&16&17&18&19&20\\
			&[-4.722, 4.761]
			&[34.088, 43.571]
			&[-5.395, 4.088]
			&[-3.677, 5.806]
			&[-3.579, 5.904]
			&[-4.753, 4.730]\cr
			\addlinespace[2pt]&1&2&3&4&5&6&7\\
			\multirow{6}{*}{$\hat{\boldsymbol{\theta}}_{s,\text{zhang}}$} 
			&[-0.335, -0.306]
			&[0.507, 0.580]
			&[-0.251, -0.195]
			&[0.248, 0.328]
			&[0.996, 1.094]
			&[17.981, 29.767]
			&[0.295, 0.427]\cr
			\addlinespace[2pt]
			&8&9&10&11&12&13&14\\
			&[-68.239, -59.366]
			&[-59.304, -47.906]
			&[0.361, 0.567]
			&[-0.531, -0.488]
			&[-0.157, -0.019]
			&[57.198, 65.772]
			&[-2.144, -1.882]\cr
			\addlinespace[2pt]
			&15&16&17&18&19&20\\
			&[0.018, 0.021]
			&[34.640, 43.018]
			&[-0.692, -0.616]
			&[0.970, 1.158]
			&[1.075, 1.249]
			&[-0.013, -0.011]\cr
			\hline
		\end{tabular}
	}
	\label{table6}
\end{table*}

Table \ref{table5} shows that the proposed DVS estimator generally achieves the smallest SD, particularly for subsample sizes $500$ or $1000$, indicating more stable estimates than \citet{shan2024optimal} under limited subsampling. Under the same $r_\mathrm{p}$, all methods incur similar computational costs and are much faster than the full-sample decorrelated estimators. However, the DVS estimator demonstrates superior inferential performance. Notably, when $r_\mathrm{p}=1000$, the DVS estimator is the only method correctly identifying both $z_1$ and $z_2$ as significantly positive at the $0.05$ level, consistent with the full-sample benchmark.

We further conduct simultaneous inference on the superconductivity dataset. As noted by Table 5 of \citet{hamidieh2018data}, $20$ features (ordered by importance) are important to the critical temperature of the superconductors. We first fit a Lasso model on the full dataset and obtain the estimates: $(-0.084,0.006,0.227,-0.312,0.253,-68.804,-0.568,
5.459,-0.284,-0.639,\\-0.157,0.513,-4.201,-0.357,
0.005,2.079,-0.216,-0.461,-0.134,0.001)$. Then we conduct simultaneous inference on the $20$ features. Denote the non-studentized and studentized versions of our estimators as $\check{\boldsymbol{\theta}}_{ns}$ and $\check{\boldsymbol{\theta}}_{s}$, respectively, and their counterparts in the baseline approach of \citet{zhang2017simultaneous} as $\hat{\boldsymbol{\theta}}_{ns,\text{zhang}}$ and $\hat{\boldsymbol{\theta}}_{s,\text{zhang}}$. The average running times across $100$ replications are $4.784$, $4.704$, $18.438$, and $18.464$ seconds for $\check{\boldsymbol{\theta}}_{ns}$, $\check{\boldsymbol{\theta}}_{s}$, $\hat{\boldsymbol{\theta}}_{ns,\text{zhang}}$, and $\hat{\boldsymbol{\theta}}_{s,\text{zhang}}$, respectively, indicating that our methods are substantially faster. The inference results are presented in Table \ref{table6}. Table \ref{table6} shows that the confidence intervals constructed by our proposed estimators closely align with the Lasso estimates, whereas those obtained by the methods of \citet{zhang2017simultaneous} deviate considerably. It also aligns with our previously inferential results, where $z_1$ corresponds to the $12$th and $z_2$ corresponds to the $3$rd feature. This discrepancy may stem from the strong multicollinearity among the covariates in the dataset-for instance, the dataset contains the geometric mean of density as an important feature and the mean of density in the remaining features that are highly correlated-which may undermine the inference accuracy of the method in \citet{zhang2017simultaneous}. In contrast, our approach alleviates this issue to some extent through the use of decorrelated score functions, which mitigates the influence of nuisance parameters. Since $\check{\boldsymbol{\theta}}_{s}$ yields feature-specific confidence intervals with adaptive lengths, we focus on its results. Except for the $9$th and $20$th features, all other $18$ features are identified as significant. Among them, the $1$st, $4$th, $6$th, $7$th, $10$th, $11$th, $13$th, $14$th, $17$th, $18$th, and $19$th features are significantly negative, while the $2$nd, $3$rd, $5$th, $8$th, $12$th, $15$th, $16$th features are significantly positive. Hence, superconductors with small values of the negative features and large values of the positive features are more likely to exhibit high critical temperatures.
\section{Conclusion}\label{sec7}

In this paper, we developed a unified and computationally efficient inference framework for both low- and high-dimensional parameters in large-scale GLMs. We propose two estimators—a DVS estimator and a multi-step estimator—that balance statistical efficiency and computational feasibility.  The DVS estimator enables fast inference with a small subsample, while both estimators achieve statistical performance comparable to full-data approaches  when $r/\sqrt{n} \to \infty$. For high-dimensional simultaneous inference, where conventional decorrelated score methods are inapplicable, we establish uniform Bahadur representations and asymptotic normality under general conditions. This enables valid simultaneous confidence intervals for large parameter sets and massive datasets, overcoming key statistical and computational challenges. Future work will focus on improving inference under measurement constraints and extending to prediction under covariate shift.


\section*{Disclosure statement}\label{disclosure-statement}
The authors declare that they have no conflicts of interest.
\section*{Acknowledgement}\label{Acknowledgement}
The authors are supported by Key technologies for coordination and interoperation of power distribution service resource under Grant No. 2021YFB2401300, and NSFC under Grant No. 12571311 and 12326606.

\section*{Supplementary material}
The supplementary material includes the additional simulation results for the Logistic regression model and the technical proofs. The code is made available at \href{https://github.com/BoFuxjtu/BoFuxjtu_DVS.git}{https://github.com/BoFuxjtu/BoFuxjtu$\_$DVS.git}.

  \bibliography{bibliography.bib}

\end{document}